\theoremstyle{plain}
\newtheorem{thm}{\protect\theoremname}
  \theoremstyle{definition}
  \newtheorem{defn}[thm]{\protect\definitionname}
  \theoremstyle{plain}
  \newtheorem{lem}[thm]{\protect\lemmaname}
  \theoremstyle{plain}
  \newtheorem*{lem*}{\protect\lemmaname}
   \author{Rotem Arnon-Friedman}
   \author{Amnon Ta-Shma}
   \affil{The Blavatnik School of Computer Science, Tel-Aviv University, Israel}
  \providecommand{\definitionname}{Definition}
  \providecommand{\lemmaname}{Lemma}
\providecommand{\theoremname}{Theorem}
\begin{document}

\title{Limits of privacy amplification against non-signalling memory attacks}
\maketitle

\makeatother

\begin{abstract}
	The task of privacy amplification, in which Alice holds some partially secret information with respect to an adversary Eve and wishes to distill it until it is completely secret, is known to be solvable almost optimally in both the classical and quantum worlds. Unfortunately, when considering an adversary who is limited only by non-signalling constraints such a statement cannot be made in general. We here consider systems which violate the chained Bell inequality and prove that under the natural assumptions of a time-ordered non-signalling system, which allow past subsystems to signal future subsystems (using the device's memory for example), super-polynomial privacy amplification by any hashing is impossible. This is of great relevance when considering practical device independent key distribution protocols which assume a super-quantum adversary. 
\end{abstract}
\bigskip{}

\begin{multicols}{2}
\section{Introduction}

\subsubsection*{Device independent key distribution}

Key distribution is the task of creating a shared secret string, called the key, between two parties. In contrast to classical key distribution protocols, which base their security on the computational power of the adversary, quantum key distribution (QKD) protocols are resilient against quantum adversaries with unbounded computational power. However, in order to apply traditional QKD security proofs, such as security proofs for the BB84 protocol \cite{BB84}, one should be able to fully characterise the devices on which the protocol is being executed. Failing to do so can introduce security flaws which can be exploited by the adversary \cite{gerhardt2011full}. Unfortunately, giving a full characterisation of quantum devices is usually an impractical task. 

Due to this difficulty, in the past few years there has been a growing interest in device independent QKD (DIQKD). In DIQKD protocols \cite{mayers1998quantum,pironio2009device} we assume that the system on which the protocol is being executed was made and given to the honest parties Alice and Bob by a malicious adversary Eve. We therefore ought to consider the system, which we know nothing about, as a black box, and the security proof cannot be based on the internal functioning of the device. 

How can this be done? As was first shown in \cite{acin2006bell}, security proofs for DIQKD can be based on observed non-local correlations between Alice and Bob, i.e., on the correlations of the outputs they get from their systems. If the correlations they observe violate some Bell inequality, such as the CHSH inequality \cite{CHSH} or other more general chained Bell inequities \cite{braunstein1990wringing,Barrett2006Maximally}, and if Alice and Bob enforce a non-signalling condition between them in order to make sure that these correlations are indeed non-local, then they can be sure that some secrecy is available to them \cite{Barrett2006Maximally}.

The first DIQKD protocol which was proven secure was a protocol by Barret, Hardy and Kent (BHK) \cite{barrett2005no}. Although this protocol cannot tolerate a reasonable amount of noise, it showed that the task of DIQKD is in principle possible. Moreover, the BHK protocol security proof applies not only against quantum adversaries, but also against non-signalling adversaries. 

When considering a non-signalling adversary the only thing which limits the adversary is the non-signalling principle. That is, the adversary has super-quantum power; however, if Alice and Bob enforce some local non-signalling constraints then these cannot be broken by the adversary. Such constraints can be enforced by shielding and isolating the devices or by placing them in a space-like separated way. For example, if Alice and Bob perform their measurements in a space-like separated way, then according to relativity theory, Alice cannot use her system in order to signal Bob and vice-versa. 

After the BHK protocol, several other DIQKD protocols, such as \cite{hanggi2009quantum,masanes2009universally}, have been proven secure, but all using an impractical assumption; in order to guarantee security each subsystem used in the protocol must be isolated from all other subsystem, such that they cannot signal each other. For example, if Alice gets $n$ systems from Eve, each producing one bit, she must isolate each of these systems, in order to make sure that no information leaks from one system to another. Such an harsh constraint, which we call the full non-signalling constraint, eliminates the possibility of devices with memory. 

Recently a new protocol, which does not share this drawback, was proven secure \cite{barrett2012unconditionally}. The sole assumption about the non-signalling constraints of the system in this protocol is that Alice, Bob and Eve cannot signal each other using the system, which is a minimal requirement from any cryptographic protocol%
\footnote{If Alice's system can signal Eve's system then Alice's secret can leak to Eve completely. If Alice's system can signal Bob's system, then the correlations they observe are not necessarily non-local and could have been produced by a deterministic system. This implies that Eve can get all the information that Alice and Bob have as well. %
}. However, this protocol, like the BHK protocol, cannot tolerate any reasonable amount of noise.

\subsubsection*{Privacy amplification}

In this paper we consider a simpler problem, called privacy amplification (PA). In the PA problem Alice holds some information which is only partially secret with respect to an adversary, Eve. Alice's goal is to distill her information, to a shorter string, which is completely (or almost completely) secret. Note that in the PA problem we only want Alice to have a secret key with respect to the adversary, while in QKD we also want Bob to hold the same key as Alice. Therefore PA is easier than QKD. 

In order to understand what exactly is the PA problem, consider the following scenario. Assume that Alice has a system, a black box, which produces for her a partially secret bit or a string, $X$. By saying that $X$ is partially secret we mean that there is some entropy in $X$ conditioned on Eve's knowledge about $X$. One would hope that by letting Alice use several such systems, which will produce several partially secret bits $X_{1}$, $X_{2}$..., $X_{n}$, she will have enough entropy in order to produce a more secret bit or a string, $K$, out of them, or in other words, she will be able to amplify the privacy of her key. The idea behind the PA protocols is to apply some hash function%
\footnote{The hash function might also take a random seed of size $m$ as an additional input; in that case $f:\{0,1\}^{n}\times\{0,1\}^{m}\rightarrow\{0,1\}^{|K|}$.%
} $f:\{0,1\}^{n}\rightarrow\{0,1\}^{|K|}$ (for $|K|<n$) to $X_{1}$, $X_{2}$..., $X_{n}$ in order to receive a shorter, but more secret, bit string $K$. The amount of secrecy is usually measured by the distance of the actual system of Alice and Eve from an ideal system, in which $K$ is uniformly distributed and uncorrelated to Eve's system. This will be defined formally in the following section. 

Since QKD in the presence of a non-signalling adversary is possible if we assume that Alice's and Bob's systems fulfil the full non-signalling conditions  \cite{hanggi2009quantum,masanes2009universally}, PA is also possible in this setting. However, it was already proven in \cite{hanggi2010impossibility} that PA is impossible if we impose non-signalling conditions only between Alice and Bob,\footnote{In contrast to the QKD problem, when considering the PA problem the only goal of Bob is to establish non-local correlations with Alice.}, i.e., Alice and Bob cannot signal each other, while signalling within their systems is possible. Recently, the impossibility result of \cite{hanggi2010impossibility} was extended to an even more general case \cite{arnon2012towards}.

A more realistic assumption to consider is that in addition to the non-signalling assumption between Alice and Bob, within the system of the parties signalling is possible only from the past to the future and not the other way around. These are natural assumptions when considering a protocol in which Alice and Bob each use just one device with memory. In that case, the inputs and outputs of past measurements (which were saved in the memory of the device) can affect the outputs of future measurements. Such conditions, which we call time-ordered non-signalling conditions, are defined formally in Definition \ref{time-orderd-non-signalling}. 

In contrast to the full non-signalling conditions, the time-ordered non-signalling conditions are easy to ensure. Alice and Bob can both shield their entire system (as has to be done anyhow in order to make sure that no information leaks straight to the adversary) and therefore signalling will be impossible between them. Moreover, when running the protocol, they will perform their measurements in a sequential manner; the first system will be measured in the beginning, then the second one and so on. This will make sure (as long as we believe that messages cannot be sent from the future to the past) that signalling is possible only in the forward direction of time. In fact, these are the non-signalling conditions that one ``gets for free'' when performing an experiment of QKD. For example, an entanglement-based protocol in which Alice and Bob receive entangled photons and measure them one after another using the same device will lead to the time-ordered non-signalling conditions. If Alice's and Bob's devices have memory then information from past measurements can be available for future measurements, i.e., signalling is possible from the past to the future but not the other way around.

In this paper we ask the following question. Under the assumptions of time-ordered non-signalling system, is privacy amplification against non-signalling adversaries possible? We give an example for a system which fulfils all the time-ordered non-signalling conditions, and in which super-polynomial PA is impossible. More precisely, we prove that for protocols which are based on a violation of chained Bell inequalities, under the assumption of a time-ordered non-signalling system, super-polynomial PA is impossible by any hashing. That is, when using $n$ black boxes, each producing a partially secret bit, the adversary can always get a great amount of information about the hashing result; at least as high as $\Omega\left(\frac{1}{n}\right)$. 

Although this proves that super-polynomial PA is impossible under these conditions, this is still a partial answer to our question for two reasons. First, there might still be some other system, in which the secrecy is based on a different Bell inequality, for which exponential PA is possible. Second, in this paper we show that, independently of which hash function Alice is using, Eve can bias the key by at least $\Omega\left(\frac{1}{n}\right)$; but can we find a specific hash function for which she cannot do any better than this? That is, is this result tight? Therefore, the question of whether (linear) privacy amplification is at all possible remains open. 

\section{Preliminaries}

\subsubsection*{Chained Bell inequalities}

For two correlated random variables $X,U$ we denote the conditional probability distribution of $X$ given $U$ by $P_{X|U}\left(x|u\right)=\Pr\left(X=x|U=u\right)$. A bipartite system is defined by the joint input-output distribution $P_{XY|UV}$, where $U$ and $X$ are usually Alice's input and output respectively, while $V$ and $Y$ are Bob's input and output. When considering a tripartite system which includes Eve, $P_{XYZ|UVW}$, Eve's input and output are $W$ and $Z$.

Bell proved that entangled quantum states can display non-local correlations under measurements \cite{bell64}. We consider the following Bell-type experiments. Alice and Bob share a bipartite system $P_{XY|UV}$ where $U\in\{0,2,...,2N-2\}$ and $V\in\{1,3,...,2N-1\}$. We define a set of allowed input paris for Alice and Bob to be $G_N = \{(u,v)|u\in U, v\in V, |u-v|=1\}\bigcup\{(0,2N-1)\}$. For each measurement of Alice $U$, and each measurement of Bob $V$, there are two possible outcomes, 0 or 1. That is, $X,Y\in\{0,1\}$. The relevant Bell inequality then reads \cite{braunstein1990wringing,Barrett2006Maximally} 
\begin{multline}
	I_{N}=P\left(X=Y|U=0,V=2N-1\right)+\\
	\underset{\begin{array}{c}
	u,v\\
	|u-v|=1
	\end{array}}{\sum}P\left(X\neq Y|U=u,V=v\right)\geq1.\label{eq:chained-inequality}
\end{multline}
This implies that correlations which satisfy $I_{N}<1$ are non-local and cannot be described by shared randomness of the parties. For $N=2$ this inequality is the CHSH inequality \cite{CHSH}.

For the maximally entangled state $|\Phi+\rangle=\frac{1}{\sqrt{2}}\left(|00\rangle+|11\rangle\right)$, if Alice's measurements are in the basis $\left\{ \mathrm{cos} \frac{\theta}{2}|0\rangle+\mathrm{sin}\frac{\theta}{2}|1\rangle,\mathrm{sin}\frac{\theta}{2}|0\rangle-\mathrm{cos}\frac{\theta}{2}|1\rangle\right\} $ for $\theta=\frac{\pi U}{2N}$ and Bob's measurements are in the same basis but for $\theta=\frac{\pi V}{2N}$ then the correlations they get satisfy 
\begin{equation}
	I_{N}^{\star}=2N\mathrm{sin}^{2}\frac{\pi}{4N}<\frac{\pi^{2}}{8N}.\label{eq:violation}
\end{equation}
This implies that $I_{N}^{\star}$ can be made arbitrarily small for sufficiently large $N$.%
\footnote{However, as $N$ gets larger it becomes difficult to close the detection loophole \cite{scarani2009black} in the performed experiments, which is essential for any protocol that is based on non-local correlations. %
}

In our proof we will assume that the systems violate the chained Bell inequality. This is of course not the only possible choice for QKD protocols, although it is the most common one. Moreover, note that since for these type of systems PA is impossible, we cannot treat in general any system which produces some secrecy as a black box and therefore PA in general is impossible. 

\subsubsection*{Non-signalling systems}

Denote Alice's and Bob's system by $P_{XY|UV}$. A minimal requirement needed for any useful system is that Alice cannot signal to Bob using the system and vice versa, otherwise, the measured Bell violation will have no meaning. This can be ensured by placing Alice and Bob in space-like separated regions or by shielding their systems.
\begin{defn}
\label{Alice-&-Bob-n.s.}(Non-signalling between Alice and Bob). A
	$2n$-party conditional probability distribution $P_{XY|UV}$ over $X,Y,U,V\in\{0,1\}^{n}$ does not allow for signalling from Alice to Bob if 
	\begin{eqnarray*}
		\forall y,u,u',v\qquad\qquad\\
		\underset{x}{\sum}P_{XY|UV}(x,y|u,v) & = & \underset{x}{\sum}P_{XY|UV}(x,y|u',v)
	\end{eqnarray*}
	and does not allow for signalling from Bob to Alice if 
	\begin{eqnarray*}
		\forall x,v,v',u\qquad\qquad\\
		\underset{y}{\sum}P_{XY|UV}(x,y|u,v) & = & \underset{y}{\sum}P_{XY|UV}(x,y|u,v')\,.
	\end{eqnarray*}
\end{defn}

This definition implies that Bob (Alice) cannot infer from his (her) part of the system which input was given by Alice (Bob). The marginal system each of them sees is the same for all inputs of the other party and therefore the system $P_{XY|UV}$ cannot be used for signalling. 

In this paper we consider the conditions that we call time-ordered non-signalling conditions.

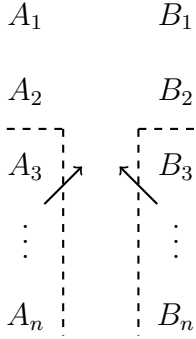
\begin{figure}[H]
\begin{centering}

\begin{tikzpicture} [font=\large]
		\draw (0,0) node {$A_{n}$};
		\draw (0,0.8) node {.};
		\draw (0,1) node {.};
		\draw (0,1.2) node {.};
		\draw (0,2) node {$A_{3}$};
		\draw (0,3) node {$A_{2}$};
		\draw (0,4) node {$A_{1}$};

		\draw (2,0) node {$B_{n}$};
		\draw (2,0.8) node {.};
		\draw (2,1) node {.};
		\draw (2,1.2) node {.};
		\draw (2,2) node {$B_{3}$};
		\draw (2,3) node {$B_{2}$};
		\draw (2,4) node {$B_{1}$};

		\draw[dashed, thick] (-0.25,2.5) -- (0.5,2.5);
		\draw[dashed, thick] (0.5,2.5) -- (0.5,-0.25);
		\draw[->, thick] (0.25,1.5) -- (0.75,2);

		\draw[dashed, thick] (1.5,2.5) -- (2.25,2.5);
		\draw[dashed, thick] (1.5,2.5) -- (1.5,-0.25);
		\draw[->, thick] (1.75,1.5) -- (1.25,2);
\end{tikzpicture}
\par\end{centering}

\caption{\label{fig:Time-orderd-non-signalling-condi}Time-ordered non-signalling condition for $i=3$. Signalling is impossible in the direction of the straight arrow.}
\end{figure}

\begin{defn}
\label{time-orderd-non-signalling}(Time-ordered non-signalling system).
	For any $i\in\{2,...,n\}$ denote the set $\{1,...,i-1\}$ by $I_{1}$ and the set $\{i,...,n\}$ by $I_{2}$ and for $i=1$ $I_{1}=\phi$ and $I_{2}=[n]$. A $2n$-party conditional probability distribution $P_{XY|UV}$ over $X,Y,U,V\in\{0,1\}^{n}$ is a time-ordered non-signalling system (does not allow for signalling from the future to the past) if for any $i\in[n]$, 
	\begin{multline*}
		\forall x_{I_{1}},y,u_{I_{1}},u_{I_{2}},u'_{I_{2}},v\qquad\qquad\\
		\shoveleft{\underset{x_{I_{2}}}{\sum}P_{XY|UV}(x_{I_{1}},x_{I_{2}},y|u_{I_{1}},u_{I_{2}},v)=}\\
		\shoveright{\underset{x_{I_{2}}}{\sum}P_{XY|UV}(x_{I_{1}},x_{I_{2}},y|u_{I_{1}},u'_{I_{2}},v)}\\
		\shoveleft{\forall x,y_{I_{1}},u,v_{I_{1}},v_{I_{2}},v'_{I_{2}}}\\
		\shoveleft{\underset{y_{I_{2}}}{\sum}P_{XY|UV}(x,y_{I_{1}},y_{I_{2}}|u,v_{I_{1}},v_{I_{2}})=}\\
		\underset{y_{I_{2}}}{\sum}P_{XY|UV}(x,y_{I_{1}},y_{I_{2}}|u,v_{I_{1}},v'_{I_{2}})\,.
	\end{multline*}
\end{defn}

Figure \ref{fig:Time-orderd-non-signalling-condi} illustrates these conditions. Note that the conditions of Definition \ref{Alice-&-Bob-n.s.} follow from these conditions.

\subsubsection*{Non-signalling adversaries}

When modelling a non-signalling adversary, the question in mind is as follows: given a system $P_{XY|UV}$ shared by Alice and Bob, for which some arbitrary non-signalling conditions hold, which extensions to a system $P_{XYZ|UVW}$, including the adversary Eve, are possible? The only principle which limits Eve is the non-signalling principle, which means that for any of her measurements $w$, the conditional system $P_{XY|UV}^{Z(w)=z}$ , for any $z\in Z$, must fulfil all of the non-signalling conditions that $P_{XY|UV}$ fulfils, and in addition $P_{XYZ|UVW}$ cannot allow signalling between Alice and Bob together and Eve.

We adopt here the model given in \cite{hanggi2009quantum,hanggi2010impossibility,hanggi2010device} of non-signalling adversaries. Because Eve cannot signal to Alice and Bob (even together) by her choice of input, we must have, for all $x,y,u,v,w,w'$, 
\begin{multline*}
	\sum_{z}P_{XYZ|UVW}(x,y,z|u,v,w)=\\
	\sum_{z}P_{XYZ|UVW}(x,y,z|u,v,w')=\\
	P_{XY|UV}(x,y|u,v).
\end{multline*}
We can therefore see Eve’s input as a choice of a convex decomposition of Alice’s and Bob’s system and her output as indicating one part of this decomposition. Formally, we can define every strategy of Eve
as a partition of Alice's and Bob's system in the following way. 

\begin{defn}
\label{Partition-system}
(Partition of the system). A partition of a given multipartite system $P_{XY|UV}$, which fulfils a certain set of non-signalling conditions $\mathcal{C}$, is a family of pairs $(p^{z},P_{XY|UV}^{z})$, where:
\begin{enumerate}
	\item $p^{z}$ is a classical distribution (i.e. for all $z$ $p^{z}\geq0$ and $\underset{z}{\sum}\, p^{z}=1$).
	\item For all $z$, $P_{XY|UV}^{z}$ is a system that fulfils $\mathcal{C}$.
	\item $P_{XY|UV}=\underset{z}{\sum}\, p^{z}P_{XY|UV\,.}^{z}$
\end{enumerate}
\end{defn}

In our scenario the goal of the adversary is to gain information about $f(X)$, for some function%
\footnote{It is enough to consider the case where Alice wants to create just one secret bit. An impossibility result for one bit implies an impossibility result for several bits.%
} $f:\{0,1\}^{n}\rightarrow\{0,1\}$. Note that since the adversarial strategy can be chosen after all public communication between Alice and Bob is done any additional random seed cannot help Alice and Bob. Therefore it is enough to consider deterministic functions in this case. 

In order to quantify how good a strategy $w$ is, i.e., how much information Eve gains about $f(X)$ by using $w$, we use the variational distance between the real system and the ideal system, in which $f(X)$ is uniformly distributed and independent of the adversary's system. 
\begin{lem}
	(Lemma 3.7 in \cite{hanggi2010device}). For the case $K=f(X)$, where $f:\{0,1\}^{n}\rightarrow\{0,1\}$, $U=u$, $V=v$, and where the strategy $w$ is defined by the partition $\left\{ (p^{z},P_{XY|UV}^{z})\right\} _{z\in\{0,1\}}$ such that $\Pr[K=0|Z=0]\geq\frac{1}{2}$, the distance from uniform of $f(X)$ given the strategy $w$ is 
	\begin{multline*}
		d\left(K|Z(w)\right)=\\
		p^{z=0}\cdot\left(\Pr[K=0|Z=0]-\Pr[K=1|Z=0]\right)\\
		-\frac{1}{2}\left(\Pr[K=0]-\Pr[K=1]\right).
	\end{multline*}
\end{lem}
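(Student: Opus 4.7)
The plan is to unfold the variational distance from its definition and then simplify using the binary structure of $K$ and $Z$ together with the stated hypothesis. Writing $p^z=\Pr[Z=z]$, the distance from uniform is
$$d(K|Z) \;=\; \tfrac{1}{2}\sum_{k,z}\bigl|\Pr[K=k,Z=z]-\tfrac{1}{2}p^z\bigr| \;=\; \sum_z p^z\bigl|\Pr[K=0|Z=z]-\tfrac{1}{2}\bigr|,$$
where the last step uses the binary identity $|\Pr[K=0|Z=z]-\tfrac{1}{2}| = |\Pr[K=1|Z=z]-\tfrac{1}{2}|$.

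Next I would remove the absolute values. The hypothesis $\Pr[K=0|Z=0]\geq\tfrac{1}{2}$ directly handles the $z=0$ term, turning it into $\tfrac{p^0}{2}\bigl(\Pr[K=0|Z=0]-\Pr[K=1|Z=0]\bigr)$. For the $z=1$ term, I would argue that it is without loss of generality to assume $\Pr[K=0|Z=1]\leq\tfrac{1}{2}$: if instead both conditional probabilities were at least $\tfrac{1}{2}$, Eve's two outputs would both bias the key the same way, and she could coarsen the partition $\{(p^z,P^z_{XY|UV})\}$ into the trivial one-output partition without changing $d(K|Z)$ (both equal $|\Pr[K=0]-\tfrac{1}{2}|$, and the formula in the lemma collapses to the same value when $p^0=1$). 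Under this reduction the $z=1$ contribution becomes $\tfrac{p^1}{2}\bigl(\Pr[K=1|Z=1]-\Pr[K=0|Z=1]\bigr)$.

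Finally I would combine the two contributions and invoke the law of total probability $\Pr[K=k] = p^0\Pr[K=k|Z=0]+p^1\Pr[K=k|Z=1]$. A short rearrangement — rewriting $\tfrac{p^0}{2}(\Pr[K=0|Z=0]-\Pr[K=1|Z=0])$ as $p^0(\Pr[K=0|Z=0]-\Pr[K=1|Z=0]) - \tfrac{p^0}{2}(\Pr[K=0|Z=0]-\Pr[K=1|Z=0])$, and then packaging the two remaining half-terms as $-\tfrac{1}{2}\bigl(p^0(\Pr[K=0|Z=0]-\Pr[K=1|Z=0])+p^1(\Pr[K=0|Z=1]-\Pr[K=1|Z=1])\bigr) = -\tfrac{1}{2}(\Pr[K=0]-\Pr[K=1])$ — yields exactly the closed form stated in the lemma.

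The only conceptually nontrivial step is the reduction to the \emph{distinguishing} case $\Pr[K=0|Z=1]\leq\tfrac{1}{2}$; the rest of the argument is straightforward bookkeeping with the definitions.
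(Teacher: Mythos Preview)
The paper does not prove this lemma; it is quoted from H\"anggi's thesis (the citation in the lemma heading), so there is no in-paper argument to compare against. Your derivation must therefore stand on its own.

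Your computation is correct once \emph{both} sign conditions $\Pr[K=0\mid Z=0]\ge\tfrac12$ and $\Pr[K=0\mid Z=1]\le\tfrac12$ hold, and the algebra from that point to the closed form is clean. The gap is the ``without loss of generality'' step. You argue that if $\Pr[K=0\mid Z=1]>\tfrac12$ then Eve may coarsen to the trivial one-element partition, for which the formula holds. But the lemma asserts an \emph{equality} for the \emph{given} partition $\{(p^z,P^z)\}$, not for some other partition achieving the same distance. Coarsening changes the right-hand side (it changes $p^{z=0}$ and $\Pr[K=0\mid Z=0]$), so verifying the identity after coarsening does not verify it before. Concretely: take $p^0=p^1=\tfrac12$, $\Pr[K=0\mid Z=0]=0.8$, $\Pr[K=0\mid Z=1]=0.6$. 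Then $d(K\mid Z)=0.2$, while the lemma's expression evaluates to $0.5\cdot 0.6-0.5\cdot 0.4=0.1$.

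What this actually reveals is that the statement as transcribed in the paper is missing the companion hypothesis $\Pr[K=0\mid Z=1]\le\tfrac12$ (equivalently: $Z$ is Eve's guess of $K$, so each output is biased toward its own label). That hypothesis is satisfied by construction in the paper's application---$P^0_{XY|UV}$ is biased toward $f(x)=0$ and $P^1_{XY|UV}$ toward $f(x)=1$---and is almost certainly explicit in the original source. With that hypothesis added, your proof goes through directly without the WLOG manoeuvre and is the standard argument.
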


\section{Main result}

In order to show an impossibility result we give a concrete adversarial strategy against any almost balanced hash functions. Eve will create a time-ordered non-signalling system between Alice, Bob and herself, such that when she inputs the hash function $f$ which was chosen by Alice on her side of the system, the output will be a guess at $f(x)$. We prove that this guess is correct with probability of at least $\frac{1}{2}+\frac{c}{n}$, where $c$ is some constant and $n$ is the number of systems shared by Alice and Bob. Against functions which are not almost balanced Eve can just use a trivial strategy and guess the value of the function without using her part of the system at all. 

As noted before, in order to prove an impossibility result it is enough to prove it for a specific system. We assume that when the adversary is not present, Alice and Bob share $n$ independent maximally entangled states and perform the measurements which achieve the violation of Equation (\ref{eq:violation}). We denote the system of each entangled pair by $P_{X_{i}Y_{i}|U_{i}V_{i}}$ for $i\in[n]$ and the whole system by $P_{XY|UV}=\underset{i\in[n]}{\prod}P_{X_{i}Y_{i}|U_{i}V_{i}}$. 

Let $f:\{0,1\}^{n}\rightarrow\{0,1\}$ be an almost balanced function. Showing a strategy is giving a partition of Alice's and Bob's system, as in Definition \ref{Partition-system}. Our partition will have 2 parts, $P_{XY|UV}^{0}$ and $P_{XY|UV}^{1}$, each occurring with probability $\frac{1}{2}$ and $P_{XY|UV}=\frac{1}{2}P_{XY|UV}^{0}+\frac{1}{2}P_{XY|UV}^{1}$. In our partition $P_{XY|UV}^{0}$ is biased towards $f(x)=0$ and $P_{XY|UV}^{1}$ towards $f(x)=1$. That is., if Eve gets an outcome of $z=0$ ($1$) when measuring her part of the system she knows that Alice's output $x$ is more likely to be a preimage of $0$ ($1$) according to $f$.

In this section we explain the idea and the intuition behind the adversarial strategy and the main principles of the proof. For the formal proof and technical details please see Appendix \ref{sec:Formal-strategy}. We start by describing how Eve can bias the system towards $f(x)=0$, i.e., what is $P_{XY|UV}^{0}$. 

Assume for the moment that for some given prefix of $x$, $x_{1...i-1}$, and function $f$ we have 
\begin{multline*}
	\underset{x_{i+1...n}}{\Pr}\left[f\left(x_{1...i-1}0x_{i+1...n}\right)=0\right] > \\
	\underset{x_{i+1...n}}{\Pr}\left[f\left(x_{1...i-1}1x_{i+1...n}\right)=0\right].
\end{multline*}
This implies that, for this specific prefix $x_{1...i-1}$, if Eve can guess the i'th bit $x_{i}$ then she can also guess the output bit of $f$. Therefor Eve can definitely benefit from biasing the  i'th bit towards 0.

Can the i'th subsystem be biased without changing the correlations Alice and Bob observe? The following lemma answers this question. 
\begin{lem}
\label{lem:individual-bias}
	For any $i\in[n]$, the system $P_{X_{i}Y_{i}|U_{i}V_{i}}$, for which $I_{N}\left(P_{X_{i}Y_{i}|U_{i}V_{i}}\right)=I_{N}^{\star}$, can be biased towards 0 (or 1) by $c(I_{N}^{\star})=\frac{I_{N}^{\star}}{2N}$.
\end{lem}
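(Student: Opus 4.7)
The plan is to exhibit an explicit convex decomposition $P_{X_{i}Y_{i}|U_{i}V_{i}} = \tfrac{1}{2}P^{0} + \tfrac{1}{2}P^{1}$ of the quantum box into two non-signalling boxes, with $P^{0}$ biased toward $X_{i}=0$ on the input $U_{i}=0$. I would write $P^{0} = P + \Delta$ and $P^{1} = P - \Delta$ for a signed perturbation $\Delta$ supported entirely on $u_{i}=0$: set $\Delta(x_{i},y_{i}\mid U_{i}=0,v_{i}) = \tfrac{c}{2}(-1)^{x_{i}}$ for every $y_{i}$ and every $v_{i}\in V$, and $\Delta\equiv 0$ whenever $u_{i}\neq 0$, where $c = I_{N}^{\star}/(2N)$.

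Non-signalling is then easy to verify. The column sums $\sum_{x_{i}}\Delta(x_{i},y_{i}\mid u_{i},v_{i})$ vanish identically, so the $Y$-marginals of $P\pm\Delta$ coincide with those of $P$ and stay independent of $u_{i}$. The row sums $\sum_{y_{i}}\Delta(x_{i},y_{i}\mid u_{i},v_{i})$ equal $(-1)^{x_{i}}c$ when $u_{i}=0$ and $0$ otherwise, both of which are independent of $v_{i}$, so the $X$-marginals of $P\pm\Delta$ remain $v_{i}$-independent.

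Nonnegativity is the place where the Bell violation enters, and is the main obstacle. Since $\Delta\equiv 0$ for $u_{i}\neq 0$, the only entries that can turn negative sit on $u_{i}=0$, and $P^{0},P^{1}\geq 0$ holds iff $P(x_{i},y_{i}\mid 0,v_{i})\geq c/2$ for every $(x_{i},y_{i},v_{i})$. A direct calculation for the maximally entangled quantum box gives $P(x_{i},y_{i}\mid 0,v_{i})\in\{\tfrac{1}{2}\cos^{2}(\pi v_{i}/(4N)),\ \tfrac{1}{2}\sin^{2}(\pi v_{i}/(4N))\}$, and the minimum of these quantities over $v_{i}\in\{1,3,\ldots,2N-1\}$ is attained at the two extremes $v_{i}\in\{1,2N-1\}$ and equals $\tfrac{1}{2}\sin^{2}(\pi/(4N)) = I_{N}^{\star}/(4N) = c/2$, so the inequality is tight. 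The bias is then immediate from $P^{0}(X_{i}=0\mid U_{i}=0) = \tfrac{1}{2}+c$, and using $-\Delta$ in place of $\Delta$ yields the symmetric bias toward $1$.

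The delicate point is that the minimum of $P(x_{i},y_{i}\mid 0,v_{i})$ must be taken over \emph{all} $v_{i}\in V$, not only over the two $v_{i}$ that appear with $u_{i}=0$ in $G_{N}$; otherwise one might hope for a larger bias. For this specific quantum box the check reduces to the trigonometric identity $\cos^{2}(\pi(2N-1)/(4N)) = \sin^{2}(\pi/(4N))$, which ties both dangerous inputs $v_{i}\in\{1,2N-1\}$ to the single violation parameter $I_{N}^{\star}$ and explains why the achievable bias is exactly the average error per term of the chained Bell expression.
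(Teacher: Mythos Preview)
Your approach---write $P^{0}=P+\Delta$, $P^{1}=P-\Delta$ for a row-preserving perturbation $\Delta$, then verify non-signalling and nonnegativity---is exactly the mechanism the paper uses. The difference is the support of $\Delta$: the paper shifts the same amount $\tfrac{1}{2}\sin^{2}(\pi/(4N))$ from $x_{i}=1$ to $x_{i}=0$ in \emph{every} input block $(u_{i},v_{i})$, not only on $u_{i}=0$. This yields a box with $P^{0}(X_{i}=0\mid u_{i},v_{i})=\tfrac{1}{2}+c(I_{N}^{\star})$ uniformly in the inputs, which is what ``biased towards $0$ by $c(I_{N}^{\star})$'' means here and what is needed later: in Appendix~C the biased box is plugged in at the pivotal coordinate for \emph{arbitrary} $u_{i(x)},v_{i(x)}$, and the distance bound in the theorem must hold whatever inputs Alice happens to use.

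With your $\Delta$ supported only on $u_{i}=0$, the resulting $P^{0}$ has $P^{0}(X_{i}=0\mid u_{i})=\tfrac{1}{2}$ for every $u_{i}\neq 0$, so there is no bias on those inputs and the downstream argument collapses whenever Alice avoids $u_{i}=0$. The fix is immediate and costs nothing: take $\Delta(x_{i},y_{i}\mid u_{i},v_{i})=\tfrac{c}{2}(-1)^{x_{i}}$ for \emph{all} $(u_{i},v_{i})$. Non-signalling still holds (the column sums vanish and the row sums equal $(-1)^{x_{i}}c$ independently of both inputs), and your nonnegativity computation already covers the general case, since the smallest entry of the quantum box over all $(u_{i},v_{i})$ is still $\tfrac{1}{2}\sin^{2}(\pi/(4N))=c/2$ (the minimum of $\min(\cos^{2}\theta,\sin^{2}\theta)$ over $\theta=\pi|u_{i}-v_{i}|/(4N)$ is attained at $|u_{i}-v_{i}|\in\{1,2N-1\}$, exactly the two cases you analysed). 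So your ``delicate point'' about minimising over all $v_{i}$ is right, but it should be a minimum over all $(u_{i},v_{i})$, and with that change your argument matches the paper's.
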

We denote the biased system by $P_{X_{i}Y_{i}|U_{i}V_{i}}^{z_{i}=\sigma}$ for $\sigma\in\{0,1\}$. The biased system is given in Appendix \ref{sec:Proof-bias-of-single-box}. 

Therefore, in our adversarial strategy, if the value of the i'th bit $x_{i}$, given the prefix $x_{1...i-1}$, has enough influence over the outcome of $f$ (we will soon define how much is enough), although the suffix is unknown, then the i'th system is being biased by $c(I_{N}^{\star})$. Note that for any prefix $x_{1...i-1}$ a different system $P_{X_{i}Y_{i}|U_{i}V_{i}}$ should be biased.

Next we say how Eve determines which subsystem $P_{X_{i}Y_{i}|U_{i}V_{i}}$ to bias for every $x$. For every function $f$, index $i\in[n]$ and prefix $x_{1...i-1}$ define 
\begin{multline*}
	\varDelta_{i}(x_{1...i-1})\equiv\biggm|\underset{x_{i+1...n}}{\Pr}\left[f(x_{1...i-1}0x_{i+1...n})=0\right]-\\
	\underset{x_{i+1...n}}{\Pr}\left[f(x_{1...i-1}1x_{i+1...n})=0\right]\biggm|.
\end{multline*}
$\varDelta_{i}(x_{1...i-1})$ quantifies how much influence the i'th bit has over $f$ given the prefix $x_{1...i-1}$%
\footnote{Note that the influence towards $f(x)=0$ and $f(x)=1$ is the same.%
}. For every $x$, Eve will bias the subsystem with the pivotal index, as we now define.

\begin{defn}
\label{Pivotal-index}(Pivotal index%
	\footnote{The terms `pivotal' and `influence' are taken from the field of Boolean function analysis.%
	}). Given $f:\{0,1\}^{n}\rightarrow\{0,1\}$, for any $x$, the pivotal index $i(x)\in[n]$ is the smallest index such that $\varDelta_{i(x)}(x_{1...i-1})\geq\frac{2}{3n}$. 
\end{defn}

Consider for example the function presented in Figure \ref{fig:Binary-tree}. The pivotal indices are marked in the binary tree of the function by a circle. For strings $x$ with prefix $x_{1}=0$ the pivotal index
is $i(x)=2$, while for strings with prefixes $x_{1}x_{2}=10$ and $x_{1}x_{2}=11$ the pivotal index is $i(x)=3$. 

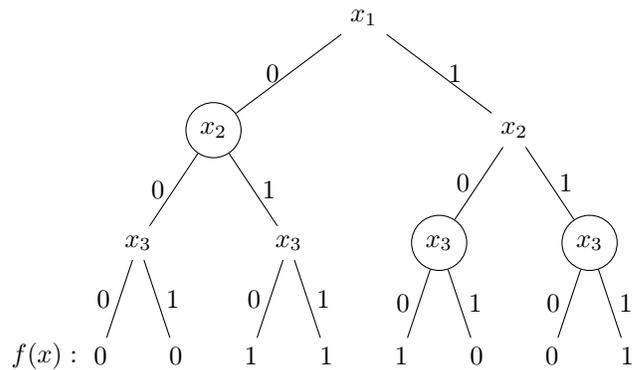
\begin{figure}[H]
\begin{centering}
\begin{tikzpicture}[level distance=1.5cm,
					level 1/.style={sibling distance=4cm},
					level 2/.style={sibling distance=2cm},
					level 3/.style={sibling distance=1cm},
					level 4/.style={sibling distance=0.5cm, level distance=0.75cm}]
	 \node {$x_1$}     
		child {node[circle,draw] {$x_2$} 
			child {node {$x_3$}
				child {node {0} 
					child [grow=left] {node {$f(x):$} edge from parent[draw=none]}
				edge from parent node[left] {0}}
				child {node {0} edge from parent node[right] {1}}
			edge from parent node[left] {0}}
			child {node {$x_3$}
				child {node {1} edge from parent node[left] {0}}
				child {node {1} edge from parent node[right] {1}}
			edge from parent node[right] {1}}
		edge from parent node[left] {0}}
		child {node {$x_2$}
			child {node[circle,draw] {$x_3$}
				child {node {1} edge from parent node[left] {0}}
				child {node {0} edge from parent node[right] {1}}
			edge from parent node[left] {0}}
			child {node[circle,draw] {$x_3$}
				child {node {0} edge from parent node[left] {0}}
				child {node {1} edge from parent node[right] {1}}
			edge from parent node[right] {1}}
		edge from parent node[right] {1}};
\end{tikzpicture}
\par\end{centering}

\caption{Binary tree with pivotal nodes. The pivotal nodes are marked with circles.\label{fig:Binary-tree}}
\end{figure}

Luckily, for every function $f:\{0,1\}^{n}\rightarrow\{0,1\}$ for which $|\underset{x}{\Pr}\left[f(x)=0\right]-\underset{x}{\Pr}\left[f(x)=1\right]|\leq\frac{1}{3}$ and every $x\in\{0,1\}^{n}$ there exists such a pivotal index $i(x)$ for which $\varDelta_{i(x)}(x_{1...i-1})\geq\frac{2}{3n}$ and therefore for every $x$ there exists some bit, $x_{i(x)}$, which can give non-negligible  information to Eve about the final output. 

\begin{lem}
\label{lem:pivotal-index}
	Let $f:\{0,1\}^{n}\rightarrow\{0,1\}$ be an almost balanced function, i.e. $|\underset{x}{\Pr}\left[f(x)=0\right]-\underset{x}{\Pr}\left[f(x)=1\right]|\leq\frac{1}{3}$. Then for any $x$ there exists a pivotal index $i(x)$ such that $\varDelta_{i(x)}(x_{1...i-1})\geq\frac{2}{3n}$.
\end{lem}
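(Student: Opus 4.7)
My plan is to walk down the binary tree of $f$ along the path determined by $x$ and track how the conditional probability of $f=0$ evolves one bit at a time, showing that the total variation across the path is large enough to force at least one big jump.

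Concretely, for the given $x$, define the sequence
\[
p_i \;=\; \Pr_{X_{i+1\ldots n}}\!\bigl[f(x_{1\ldots i}\,X_{i+1\ldots n}) = 0\bigr], \qquad i=0,1,\ldots,n,
\]
where the probability is over a uniformly random suffix. The endpoints are easy: $p_0 = \Pr_X[f(X)=0]$ is unconditional, and $p_n = \mathbf{1}[f(x)=0] \in \{0,1\}$ is determined by $x$ itself. The almost-balanced hypothesis says $|2p_0 - 1|\le \tfrac{1}{3}$, i.e., $p_0 \in [\tfrac{1}{3},\tfrac{2}{3}]$, so regardless of which value $p_n$ takes,
\[
|p_n - p_0| \;\ge\; \tfrac{1}{3}.
\]

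The key calculation is the one-step identity $|p_i - p_{i-1}| = \tfrac{1}{2}\,\varDelta_i(x_{1\ldots i-1})$. This is immediate once we set $a = \Pr_{X_{i+1\ldots n}}[f(x_{1\ldots i-1}0X_{i+1\ldots n})=0]$ and $b$ analogously for bit $1$: then $p_{i-1} = (a+b)/2$ by the law of total probability on the uniform bit at position $i$, while $p_i$ equals whichever of $a$ or $b$ matches $x_i$. In either case $|p_i - p_{i-1}| = \tfrac{1}{2}|a-b| = \tfrac{1}{2}\varDelta_i(x_{1\ldots i-1})$.

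Combining the one-step identity with the telescoping triangle inequality,
\[
\sum_{i=1}^{n}\tfrac{1}{2}\,\varDelta_i(x_{1\ldots i-1}) \;=\; \sum_{i=1}^{n}|p_i - p_{i-1}| \;\ge\; |p_n - p_0| \;\ge\; \tfrac{1}{3},
\]
so $\sum_{i=1}^n \varDelta_i(x_{1\ldots i-1}) \ge \tfrac{2}{3}$. By averaging there must exist some $i \in [n]$ with $\varDelta_i(x_{1\ldots i-1}) \ge \tfrac{2}{3n}$, and the smallest such index is the pivotal index $i(x)$. I do not expect any real obstacle here; the only thing to be careful about is that the endpoint $p_0$ of the walk really is the unconditional bias (so the balance hypothesis applies) while $p_n$ is deterministic, which makes the total variation at least $\tfrac{1}{3}$ rather than something depending on $x$.
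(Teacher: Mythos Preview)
Your proof is correct and follows essentially the same approach as the paper: both track the sequence $p_i=\Pr[f=0\mid x_{1\ldots i}]$ along the path of $x$, use the averaging identity $p_{i-1}=\tfrac12(a+b)$ to relate a one-step jump to $\tfrac12\varDelta_i$, bound the total variation by $|p_n-p_0|\ge\tfrac13$ from the balance hypothesis, and then apply pigeonhole. The only cosmetic difference is that the paper first bounds the \emph{maximum} jump by $\tfrac{1}{3n}$ and then doubles it via the averaging relation, whereas you fold the factor of two in first and then average the sum; the content is identical.
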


Lemma \ref{lem:pivotal-index} is proven in Appendix \ref{sec:Proof-pivotal-index}. 
Putting everything together, Eve's strategy is as follows. For every $x$ the $i(x)$'th subsystem, where $i(x)$ is the pivotal index of $x$, is biased. It is biased by $c(I_{N}^{\star})$ towards 0 if $\underset{x_{i+1...n}}{\Pr}\left[f(x_{1...i-1}0x_{i+1...n})=0\right]>\underset{x_{i+1...n}}{\Pr}\left[f(x_{1...i-1}1x_{i+1...n})=0\right]$ and towards 1 otherwise. The system $P_{XY|UV}^{0}$ which results from such a strategy is given in Equation (\ref{eq:formal-definition}) in Appendix \ref{sec:Formal-strategy}. 

The strategy for biasing the system towards $f(x)=1$ is symmetric to the strategy we described for $f(x)=0$. The only difference is that Eve will bias the i'th system by $c(I_{N}^{\star})$ towards 0 if $\underset{x_{i+1...n}}{\Pr}\left[f(x_{1...i-1}0x_{i+1...n})=0\right]<\underset{x_{i+1...n}}{\Pr}\left[f(x_{1...i-1}1x_{i+1...n})=0\right]$ and towards 1 otherwise, and not the other way around. The fact that these two symmetric systems put together a legal partition, as in Definition \ref{Partition-system}, is proven in Appendix \ref{sec:Formal-strategy}.

Since Eve biases a different subsystem $P_{X_{i(x)}Y_{i(x)}|U_{i(x)}V_{i(x)}}$ for every $x$, it is not clear that the system $P_{XY|UV}^{0}$ is indeed time-ordered non-signalling. The key idea for proving such a thing is that for every $x$, the location of the pivotal index $i(x)$ depends only on the prefix of $x$ until this index exactly. Intuitively,  in our case this corresponds to the fact that signalling is possible from past measurements to future measurements, or in other words, the fact that in any given time the prefix of $x$ can be saved in Alice's device. This is proven formally in Appendix \ref{sec:Formal-strategy}. 

How much information does this strategy give Eve? For every $x$ the $i(x)$'th subsystem is biased by $c(I_{N}^{\star})$. However, the advantage Eve gets from this shift in the probabilities is only $c(I_{N}^{\star})\cdot\varDelta_{i(x)}(x_{1...i-1})$ since the pivotal bit does not determine $f(x)$ exactly%
\footnote{When we shift some probability $\pi$ around from a cell which has probability $p_{1}$ to result in $f(x)=0$ (over the suffix) to a cell which has probability $p_{2}$ to result in $f(x)=0$ the advantage we get from shifting $\pi$ is $\pi\cdot(p_{2}-p_{1})$. In our case, $p_{2}-p_{1}$ is exactly $\varDelta_{i(x)}(x_{1...i-1})$ in our case. %
}.
Moreover, since $P_{XY|UV}^{0}$ and $P_{XY|UV}^{1}$ are symmetric and both occur with the same probability $\frac{1}{2}$ they both contribute the same amount of knowledge to Eve. 

As mentioned before, for any function for which $|\underset{x}{\Pr}\left[f(x)=0\right]-\underset{x}{\Pr}\left[f(x)=1\right]|>\frac{1}{3}$ Eve can just guess the value of the function with a constant success probability of at least $\frac{2}{3}$. Therefore these kind of functions do not bother us. Altogether we get the following theorem. 

\begin{thm}
	There exists a time-ordered non-signalling system $P_{XY|UV}$ as in Definition \ref{time-orderd-non-signalling} such that for any hash function $f:\{0,1\}^{n}\rightarrow\{0,1\}$ there exists a strategy $w$, for which the distance from uniform of $f(x)$ given $w$ is at least $c(I_{N}^{\star})\cdot\frac{2}{3n}$, i.e., $d\left(f(x)|Z(w)\right)\geq c(I_{N}^{\star})\cdot\frac{2}{3n}\in\varOmega(\frac{1}{n})$ where $I_{N}(P_{XY|UV})=I_{N}^{\star}$ and $c(I_{N}^{\star})=\frac{I_{N}^{\star}}{2N}$.%
		\footnote{Remember that $n$ is the number of systems while $N$ is the number of possible measurements for each system. For any given protocol $N$ is constant and therefore so also is $I_{N}^{\star}$.%
		}
\end{thm}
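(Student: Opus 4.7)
The plan is to split into two cases based on the balance of $f$. If $|\Pr[f(X)=0]-\Pr[f(X)=1]| > 1/3$, Eve ignores her part of the system and outputs the majority value of $f$; this already yields distance from uniform at least $1/6$, which dominates $c(I_N^\star)\cdot 2/(3n)$ for any $n$ of interest (and certainly in the $\Omega(1/n)$ sense). All the real work is in the almost-balanced case, where the partition constructed in the body of the paper is used.

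For an almost-balanced $f$, Lemma \ref{lem:pivotal-index} guarantees that every $x$ has a pivotal index $i(x)$ with $\Delta_{i(x)}(x_{1\ldots i(x)-1})\geq 2/(3n)$. Eve uses the two-outcome partition $\{(\tfrac12,P^0),(\tfrac12,P^1)\}$ described in the paper: in $P^0$ the $i(x)$'th subbox is biased by $c(I_N^\star)$, via Lemma \ref{lem:individual-bias}, toward whichever value of $x_{i(x)}$ makes $f=0$ more probable given the prefix; in $P^1$ the opposite bias is applied. Three things would then need to be checked: (i) $\tfrac12 P^0+\tfrac12 P^1=P_{XY|UV}$; (ii) each of $P^0, P^1$ satisfies the time-ordered non-signalling conditions of Definition \ref{time-orderd-non-signalling}; and (iii) the distance from uniform of $f(X)$ in this partition is at least $c(I_N^\star)\cdot 2/(3n)$. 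Point (i) is immediate: for a fixed $x$ the two branches apply equal and opposite shifts to the same $i(x)$'th subbox, so these cancel in the uniform average and recover the original product box.

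The main obstacle is (ii). A priori, since Eve biases a different subbox for different $x$, the construction could smuggle in backward-in-time signalling. The unlock is that both the pivotal index $i(x)$ and the direction of the bias depend only on the prefix $x_{1\ldots i(x)-1}$. I would make this precise by viewing $P^0$ as a sequential generation: at site $j$, the bias decision is a function of the already-produced outputs $x_{1\ldots j-1}$, applied to the single-site box $P_{X_jY_j|U_jV_j}$, which is itself non-signalling by Lemma \ref{lem:individual-bias}. Fixing an index $i$ and marginalising over $x_{I_2}$ then reduces to the same term-by-term analysis as in the unmodified product box: for every $j\in I_2$, the box used at site $j$ (biased or not) has an Alice-marginal independent of $u_j$, so the sum over $x_{I_2}$ is independent of $u_{I_2}$. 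Bob's side is handled symmetrically, and plain Alice--Bob non-signalling then follows as a special case.

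For (iii), I apply the lemma of \cite{hanggi2010device} cited just before the main theorem, which converts the partition into an explicit expression for $d(K|Z(w))$. In the $P^0$-branch, the probability mass moved toward the ``good'' value of $x_{i(x)}$ is exactly $c(I_N^\star)$, and this translates, through the unchanged suffix distribution, into an increase in $\Pr[f(X)=0 \mid Z=0]$ of $c(I_N^\star)\cdot \Delta_{i(x)}(x_{1\ldots i(x)-1})$ per prefix branch. Summing these contributions weighted by their prefix probabilities, and using $\Delta_{i(x)}\geq 2/(3n)$ uniformly, gives an overall bias of at least $c(I_N^\star)\cdot 2/(3n)$ in $P^0$. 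Because $P^1$ contributes symmetrically in the opposite direction, the $-\tfrac12(\Pr[K=0]-\Pr[K=1])$ correction in the lemma cancels the common global drift, and the claimed lower bound on $d(f(X)|Z(w))$ survives.
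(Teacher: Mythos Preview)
Your proposal is correct and follows essentially the same approach as the paper: the same two-case split on the balance of $f$, the same pivotal-index partition $\{(\tfrac12,P^0),(\tfrac12,P^1)\}$ in the almost-balanced case, and the same three verification steps, with the crucial observation that $i(x)$ and $\sigma$ depend only on the prefix $x_{1\ldots i(x)-1}$ driving the time-ordered non-signalling check. The only minor slip is terminological: in your argument for (ii) you want the marginal obtained by summing over $x_j$ (i.e., Bob's view at site $j$) to be independent of $u_j$, not ``Alice's marginal''; the claim is true for both the unbiased and biased single-site boxes, so the argument goes through.
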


\begin{proof}
	If $f:\{0,1\}^{n}\rightarrow\{0,1\}$ is an almost balanced function as in Lemma \ref{lem:pivotal-index} then $w$ is the strategy described above, for which $d\left(f(x)|Z(w)\right)\geq c(I_{N}^{\star})\cdot\frac{2}{3n}$. Otherwise, the strategy is to guess $f(x)$. For this trivial strategy we have $d\left(f(x)|Z(w)\right)\geq\frac{2}{3}-\frac{1}{2}\geq c(I_{N}^{\star})\cdot\frac{2}{3n}$.
\end{proof}

\subsubsection*{Concluding remarks and open questions}

In this paper we showed that when considering systems which can signal only forward in time and non-signalling adversaries, then super-polynomial privacy amplification by any hash function is impossible. For protocols which are based on the violation of the chained Bell inequalities, we presented a specific adversarial strategy which uses the memory of the device in order to gain information about the value of the function. 

It is not yet clear whether our result is tight. We showed that, independently of which hash function Alice is using, Eve can bias the key by at least $\Omega\left(\frac{1}{n}\right)$. For some bad choices of hash functions Eve can get even more information than $\Omega\left(\frac{1}{n}\right)$ by using the same strategy. For example, if the chosen hash function is the XOR, then by using the exact same strategy, but with a different analysis, Eve can bias the final key bit by a constant. When using the Majority function this strategy can only give her $\Omega\left(\frac{1}{\sqrt{n}}\right)$ bias. Is this the best Eve can do? Can we find a specific hash function for which she cannot do any better than this? The question whether linear privacy amplification is possible or not therefore remains open.

\paragraph*{Acknowledgments:}

Rotem Arnon-Friedman thanks Roger Colbeck for helpful comments. Both authors acknowledge support from the FP7 FET-Open Project QCS. 

\end{multicols}

\begin{center}
\line(1,0){250}
\end{center}

\begin{multicols}{2}
\bibliographystyle{unsrt}
\bibliography{/Users/rotem/Documents/Research/biblopropos.bib}
\end{multicols}

\begin{center}
\line(1,0){250}
\end{center}

\appendixpage
\begin{appendices}

\section{\label{sec:Proof-bias-of-single-box}Proof of Lemma \ref{lem:individual-bias}}

We now prove the following lemma:
\begin{lem*}
	For any $i\in[n]$, the system $P_{X_{i}Y_{i}|U_{i}V_{i}}$, for which $I_{N}\left(P_{X_{i}Y_{i}|U_{i}V_{i}}\right)=I_{N}^{\star}$, can be biased towards 0 (or 1) by $c(I_{N}^{\star})=\frac{I_{N}^{\star}}{2N}$.
\end{lem*}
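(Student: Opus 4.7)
The plan is to construct an explicit convex partition of the subsystem $P \equiv P_{X_{i}Y_{i}|U_{i}V_{i}}$ into two equally weighted non-signalling parts $P = \tfrac{1}{2}P^{0}+\tfrac{1}{2}P^{1}$ with $\Pr_{P^{0}}[X=0\,|\,U=u] = \tfrac{1}{2}+c$ and $\Pr_{P^{1}}[X=1\,|\,U=u] = \tfrac{1}{2}+c$ for every input $u$, where $c = I_{N}^{\star}/(2N) = \sin^{2}(\pi/(4N))$. I will set $P^{0} = P + \delta$ and $P^{1} = P-\delta$ for a suitable perturbation $\delta(x,y\,|\,u,v)$; the convex-combination identity then holds automatically, and the task reduces to picking $\delta$ so that (a) both $P^{0}$ and $P^{1}$ are non-signalling, (b) both remain non-negative, and (c) the marginal on $X$ in $P^{0}$ is shifted by $+c$.

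A crucial observation is that the quantum subsystem obtained from $|\Phi^{+}\rangle$ with the optimal chained-Bell measurements has uniform marginals and a joint distribution depending only on $|u-v|$:
\[
 P(x,y\,|\,u,v) = \tfrac{1}{2}\cos^{2}\tfrac{\pi(u-v)}{4N} \ \text{if}\ x=y, \qquad P(x,y\,|\,u,v) = \tfrac{1}{2}\sin^{2}\tfrac{\pi(u-v)}{4N} \ \text{if}\ x\neq y.
\]
Given this symmetry I take the \emph{input-independent} shift $\delta(x,y\,|\,u,v) = (c/2)(-1)^{x}$, which transfers a mass of $c/2$ from each cell with $x=1$ to the corresponding cell with the same $y$ and $x=0$. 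Non-signalling of $P^{0}$ (and symmetrically of $P^{1}$) is then immediate: $\sum_{x}\delta = 0$ preserves Bob's marginal, while $\sum_{y}\delta = c(-1)^{x}$ is independent of $v$, so Alice's marginal is independent of Bob's input and is shifted by $\pm c$ from $1/2$.

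The only remaining condition is non-negativity of $P^{0}$ and $P^{1}$, which reduces to $P(1,y\,|\,u,v) \geq c/2$ for $y \in \{0,1\}$ and every $(u,v)$. Because $u$ is even and $v$ is odd, $|u-v|$ is an odd integer in $\{1,3,\ldots,2N-1\}$, so $\pi|u-v|/(4N)$ lies in $[\pi/(4N),\,\pi/2 - \pi/(4N)]$. On this interval $\sin^{2}$ is increasing and $\cos^{2}$ is decreasing, so each is bounded below by $\sin^{2}(\pi/(4N))$: $\sin^{2}$ attains its minimum at $|u-v|=1$ and $\cos^{2}$ at $|u-v|=2N-1$ (using $\cos^{2}(\pi/2-\theta)=\sin^{2}\theta$). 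Both bounds equal $c/2$, with equality at the extremal pairs, which explains why $c = I_{N}^{\star}/(2N)$ is the natural threshold. A direct computation then gives $\Pr_{P^{0}}[X=0\,|\,U=u] = \sum_{y}\bigl(P(0,y\,|\,u,v) + c/2\bigr) = 1/2 + c$, so $P^{0}$ is biased towards $0$ by $c(I_{N}^{\star})$ and $P^{1}$ is symmetrically biased towards $1$. I expect the main conceptual step to be recognising that the right $\delta$ can be taken completely input-independent; once that is noticed, all verifications rest on a single monotonicity fact about $\sin^{2}$ on $[0,\pi/2]$.
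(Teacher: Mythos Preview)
Your construction is correct and coincides with the paper's: both take the same perturbation $\delta(x,y\,|\,u,v)=\tfrac{1}{2}\sin^{2}\!\bigl(\tfrac{\pi}{4N}\bigr)(-1)^{x}$, i.e., a row-wise transfer of mass $c/2$ from $x=1$ to $x=0$, yielding the same $P^{0},P^{1}$. Your write-up is in fact slightly more complete, since you verify non-signalling and non-negativity over \emph{all} odd $|u-v|\in\{1,\dots,2N-1\}$ rather than only the chained-Bell pairs $G_{N}$; one small wording slip is that the common lower bound on $\sin^{2}$ and $\cos^{2}$ is $\sin^{2}(\pi/(4N))=c$, and it is the factor $\tfrac{1}{2}$ in $P$ that turns this into the required $c/2$.
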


\begin{proof}

	In order to prove this we define the system $P_{X_{i}Y_{i}|U_{i}V_{i}}^{z_{i}=0}$ which is biased towards 0 by $c(I_{N}^{\star})$. We do so by shifting probabilities around in the original unbiased system $P_{X_{i}Y_{i}|U_{i}V_{i}}$. The original system $P_{X_{i}Y_{i}|U_{i}V_{i}}$, as in Figure \ref{fig:unbiased-box}, describes the measurements statistics of the maximally entangled state $|\Phi+\rangle=\frac{1}{\sqrt{2}}\left(|00\rangle+|11\rangle\right)$ in the basis $\left\{ \mathrm{cos}\frac{\theta}{2}|0\rangle+\mathrm{sin}\frac{\theta}{2}|1\rangle,\mathrm{sin}\frac{\theta}{2}|0\rangle-\mathrm{cos}\frac{\theta}{2}|1\rangle\right\} $, where for Alice $\theta=\frac{\pi U}{2N}$ , $U\in\{0,2,...2N-2\}$ and for Bob $\theta=\frac{\pi V}{2N}$, $V\in\{1,3,...2N-1\}$ . 

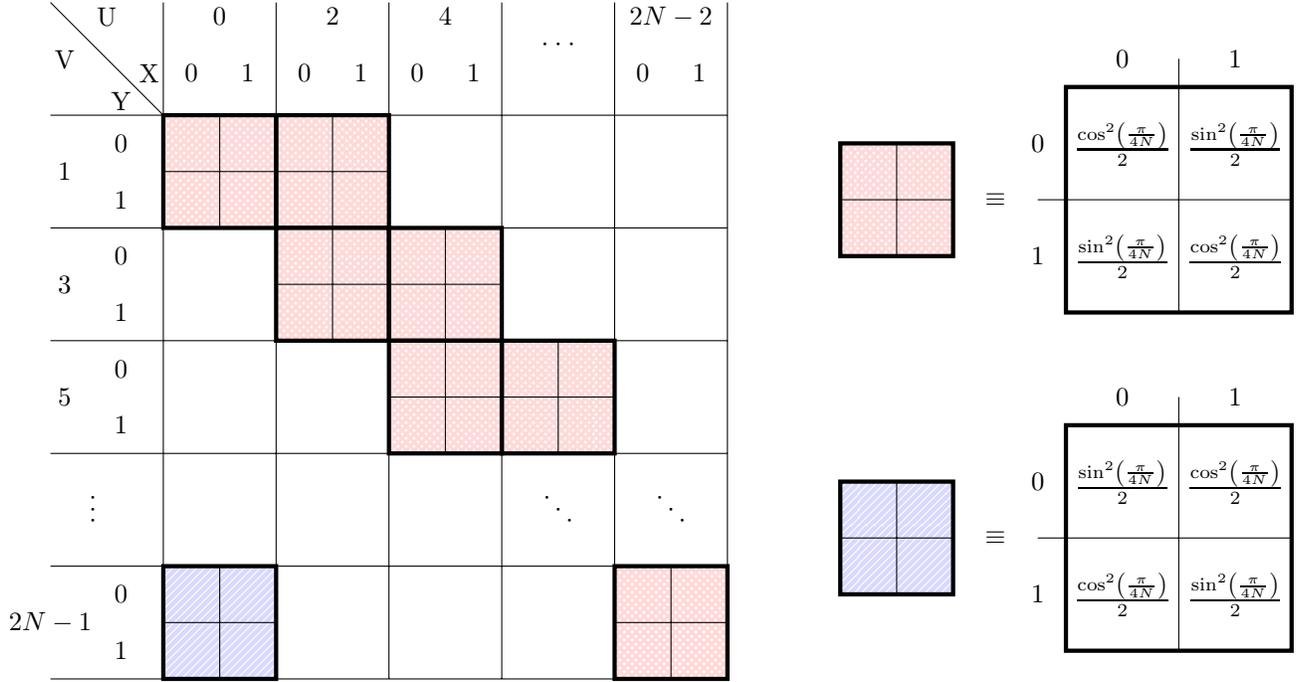
\begin{figure}[p]
\begin{centering}
\begin{tikzpicture}[scale=0.75]
	\draw[step=2] (0,0) grid (10,10);

	\draw (-2,12)--(0,10);
	\draw (-0.75,10.25) node {Y};
	\draw (-0.25,10.75) node {X};
	\draw (-1,11.75) node {U};
	\draw (-1.75,11) node {V};

	\draw (0,10)--(0,12);
	\draw (1,11.75) node {0};
		\draw (0.5,10.75) node {0};
		\draw (1.5,10.75) node {1};
	\draw (2,10)--(2,12);
	\draw (3,11.75) node {2};
		\draw (2.5,10.75) node {0};
		\draw (3.5,10.75) node {1};
	\draw (4,10)--(4,12);
	\draw (5,11.75) node {4};
		\draw (4.5,10.75) node {0};
		\draw (5.5,10.75) node {1};
	\draw (6,10)--(6,12);
	\draw (7,11.25) node {$\cdot\cdot\cdot$};
	\draw (8,10)--(8,12);
	\draw (9,11.75) node {$2N-2$};
		\draw (8.5,10.75) node {0};
		\draw (9.5,10.75) node {1};
	\draw (10,10)--(10,12);

	\draw (-2,0)--(0,0);
	\draw (-2,1) node {$2N-1$};
		\draw (-0.75,0.5) node {1};
		\draw (-0.75,1.5) node {0};
	\draw (-2,2)--(0,2);
	\draw (-1.25,2.8) node {$\cdot$};
	\draw (-1.25,3) node {$\cdot$};
	\draw (-1.25,3.2) node {$\cdot$};
	\draw (-2,4)--(0,4);
	\draw (-1.75,5) node {5};
		\draw (-0.75,4.5) node {1};
		\draw (-0.75,5.5) node {0};
	\draw (-2,6)--(0,6);
	\draw (-1.75,7) node {3};
		\draw (-0.75,6.5) node {1};
		\draw (-0.75,7.5) node {0};
	\draw (-2,8)--(0,8);
	\draw (-1.75,9) node {1};
		\draw (-0.75,8.5) node {1};
		\draw (-0.75,9.5) node {0};
	\draw (-2,10)--(0,10);

	\fill[red!15, postaction={pattern=crosshatch dots, pattern color=white}](0,8) rectangle (4,10);
	\draw[step=1] (0,8) grid (4,10);
	\draw[ultra thick] (0,8) rectangle (2,10);
	\draw[ultra thick] (2,8) rectangle (4,10);

	\fill[red!15, postaction={pattern=crosshatch dots, pattern color=white}](2,6) rectangle (6,8);
	\draw[step=1] (2,6) grid (6,8);
	\draw[ultra thick] (2,6) rectangle (4,8);
	\draw[ultra thick] (4,6) rectangle (6,8);

	\fill[red!15, postaction={pattern=crosshatch dots, pattern color=white}](4,4) rectangle (8,6);
	\draw[step=1] (4,4) grid (8,6);
	\draw[ultra thick] (4,4) rectangle (6,6);
	\draw[ultra thick] (6,4) rectangle (8,6);

	\draw (8.8,3.2) node {$\cdot$};
	\draw (9,3) node {$\cdot$};
	\draw (9.2,2.8) node {$\cdot$};

	\draw (6.8,3.2) node {$\cdot$};
	\draw (7,3) node {$\cdot$};
	\draw (7.2,2.8) node {$\cdot$};

	\fill[red!15, postaction={pattern=crosshatch dots, pattern color=white}](8,0) rectangle (10,2);
	\draw[step=1] (8,0) grid (10,2);
	\draw[ultra thick] (8,0) rectangle (10,2);

	\fill[blue!15, postaction={pattern=north east lines, pattern color=white}](0,0) rectangle (2,2);
	\draw[step=1] (0,0) grid (2,2);
	\draw[ultra thick] (0,0) rectangle (2,2);

	\begin{scope}[shift={(12,6.5)}]
		\fill[red!15, postaction={pattern=crosshatch dots, pattern color=white}](0,1) rectangle (2,3);
		\draw[step=1] (0,1) grid (2,3);
		\draw[ultra thick] (0,1) rectangle (2,3);
		
		\draw (2.75,2) node {$\equiv$};

		\draw[step=2] (4,0) grid (8,4);
		\draw[ultra thick] (4,0) rectangle (8,4);
		
		\draw (5,4.5) node {0};
		\draw (6,4)--(6,4.5);
		\draw (7,4.5) node {1};

		\draw (3.5,3) node {0};
		\draw (4,2)--(3.5,2);
		\draw (3.5,1) node {1};

		\draw (5,3) node {$\frac{\mathrm{cos}^{2}\left(\frac{\pi}{4N}\right)}{2}$};
		\draw (5,1) node {$\frac{\mathrm{sin}^{2}\left(\frac{\pi}{4N}\right)}{2}$};
		\draw (7,3) node {$\frac{\mathrm{sin}^{2}\left(\frac{\pi}{4N}\right)}{2}$};
		\draw (7,1) node {$\frac{\mathrm{cos}^{2}\left(\frac{\pi}{4N}\right)}{2}$};
	\end{scope}

	\begin{scope}[shift={(12,0.5)}]
		\fill[blue!15, postaction={pattern=north east lines, pattern color=white}](0,1) rectangle (2,3);
		\draw[step=1] (0,1) grid (2,3);
		\draw[ultra thick] (0,1) rectangle (2,3);
		
		\draw (2.75,2) node {$\equiv$};

		\draw[step=2] (4,0) grid (8,4);
		\draw[ultra thick] (4,0) rectangle (8,4);
		
		\draw (5,4.5) node {0};
		\draw (6,4)--(6,4.5);
		\draw (7,4.5) node {1};

		\draw (3.5,3) node {0};
		\draw (4,2)--(3.5,2);
		\draw (3.5,1) node {1};

		\draw (5,3) node {$\frac{\mathrm{sin}^{2}\left(\frac{\pi}{4N}\right)}{2}$};
		\draw (5,1) node {$\frac{\mathrm{cos}^{2}\left(\frac{\pi}{4N}\right)}{2}$};
		\draw (7,3) node {$\frac{\mathrm{cos}^{2}\left(\frac{\pi}{4N}\right)}{2}$};
		\draw (7,1) node {$\frac{\mathrm{sin}^{2}\left(\frac{\pi}{4N}\right)}{2}$};
	\end{scope}
	  
\end{tikzpicture}
\par\end{centering}

\caption{\label{fig:unbiased-box}The unbiased system $P_{X_{i}Y_{i}|U_{i}V_{i}}$ for which $I_{N}^{\star}=2N\mathrm{sin}^{2}\frac{\pi}{4N}$. The empty squares in the figure are not relevant for the correlations and therefore are not considered in cryptographic protocols. }
\end{figure}

	In order to bias this system towards 0 we shift probabilities within each individual square in the figure, such that each square will be biased toward 0 by $\mathrm{sin}^{2}\left(\frac{\pi}{4N}\right)$. We do so by shifting in every row probability of $\frac{1}{2}\mathrm{sin}^{2}\left(\frac{\pi}{4N}\right)$ out from the cell with $x_{i}=1$ and into the cell with $x_{i}=0$, as indicated in Figure \ref{fig:biased-box}. Each square corresponds to a different measurement made by Alice and Bob, and therefore for every measurement the bias is the same and equivalent to $c(I_{N}^{\star})=\frac{1}{2N}\cdot I_{N}^{\star}$.

\begin{figure}[p]
\begin{centering}
\begin{tikzpicture}[scale=0.85]

	\begin{scope}
		\fill[red!15, postaction={pattern=crosshatch dots, pattern color=white}](0,0) rectangle (4,4);
		\draw[step=2] (0,0) grid (4,4);
		\draw[ultra thick] (0,0) rectangle (4,4);
		
		\draw (1,4.5) node {0};
		\draw (2,4)--(2,4.5);
		\draw (3,4.5) node {1};

		\draw (-0.5,3) node {0};
		\draw (0,2)--(-0.5,2);
		\draw (-0.5,1) node {1};

		\draw (1,3) node {$\frac{\mathrm{cos}^{2}\left(\frac{\pi}{4N}\right)}{2}$};
		\draw (1,1) node {$\frac{\mathrm{sin}^{2}\left(\frac{\pi}{4N}\right)}{2}$};
		\draw (3,3) node {$\frac{\mathrm{sin}^{2}\left(\frac{\pi}{4N}\right)}{2}$};
		\draw (3,1) node {$\frac{\mathrm{cos}^{2}\left(\frac{\pi}{4N}\right)}{2}$};

		\draw (5,2) node[font=\large] {$\Longrightarrow$};

		\fill[red!15, postaction={pattern=crosshatch dots, pattern color=white}](6,0) rectangle (10,4);
		\draw[step=2] (6,0) grid (10,4);
		\draw[ultra thick] (6,0) rectangle (10,4);
		
		\draw (7,4.5) node {0};
		\draw (8,4)--(8,4.5);
		\draw (9,4.5) node {1};

		\draw (5.5,3) node {0};
		\draw (6,2)--(5.5,2);
		\draw (5.5,1) node {1};

		\draw (7,3) node {$\frac{1}{2}$};
		\draw (7,1) node {$\mathrm{sin}^{2}\left(\frac{\pi}{4N}\right)$};
		\draw (9,3) node {$0$};
		\draw (9,1.5) node {$\frac{\mathrm{cos}^{2}\left(\frac{\pi}{4N}\right)}{2}-$};
		\draw (9,0.5) node {$\frac{\mathrm{sin}^{2}\left(\frac{\pi}{4N}\right)}{2}$};
	\end{scope}
	
	\begin{scope}[shift={(0,-6)}]
		\fill[blue!15, postaction={pattern=north east lines, pattern color=white}](0,0) rectangle (4,4);
		\draw[step=2] (0,0) grid (4,4);
		\draw[ultra thick] (0,0) rectangle (4,4);
		
		\draw (1,4.5) node {0};
		\draw (2,4)--(2,4.5);
		\draw (3,4.5) node {1};

		\draw (-0.5,3) node {0};
		\draw (0,2)--(-0.5,2);
		\draw (-0.5,1) node {1};

		\draw (1,3) node {$\frac{\mathrm{sin}^{2}\left(\frac{\pi}{4N}\right)}{2}$};
		\draw (1,1) node {$\frac{\mathrm{cos}^{2}\left(\frac{\pi}{4N}\right)}{2}$};
		\draw (3,3) node {$\frac{\mathrm{cos}^{2}\left(\frac{\pi}{4N}\right)}{2}$};
		\draw (3,1) node {$\frac{\mathrm{sin}^{2}\left(\frac{\pi}{4N}\right)}{2}$};

		\draw (5,2) node[font=\large] {$\Longrightarrow$};

		\fill[blue!15, postaction={pattern=north east lines, pattern color=white}](6,0) rectangle (10,4);
		\draw[step=2] (6,0) grid (10,4);
		\draw[ultra thick] (6,0) rectangle (10,4);
		
		\draw (7,4.5) node {0};
		\draw (8,4)--(8,4.5);
		\draw (9,4.5) node {1};

		\draw (5.5,3) node {0};
		\draw (6,2)--(5.5,2);
		\draw (5.5,1) node {1};

		\draw (7,3) node {$\mathrm{sin}^{2}\left(\frac{\pi}{4N}\right)$};
		\draw (7,1) node {$\frac{1}{2}$};
		\draw (9,3.5) node {$\frac{\mathrm{cos}^{2}\left(\frac{\pi}{4N}\right)}{2}-$};
		\draw (9,2.5) node {$\frac{\mathrm{sin}^{2}\left(\frac{\pi}{4N}\right)}{2}$};
		\draw (9,1) node {$0$};
	\end{scope}
	  	  
\end{tikzpicture}
\par\end{centering}

\caption{\label{fig:biased-box}The biased system $P_{X_{i}Y_{i}|U_{i}V_{i}}^{z_{i}=0}$. Here are the same squares of Figure \ref{fig:unbiased-box} after the probability shift.}
\end{figure}
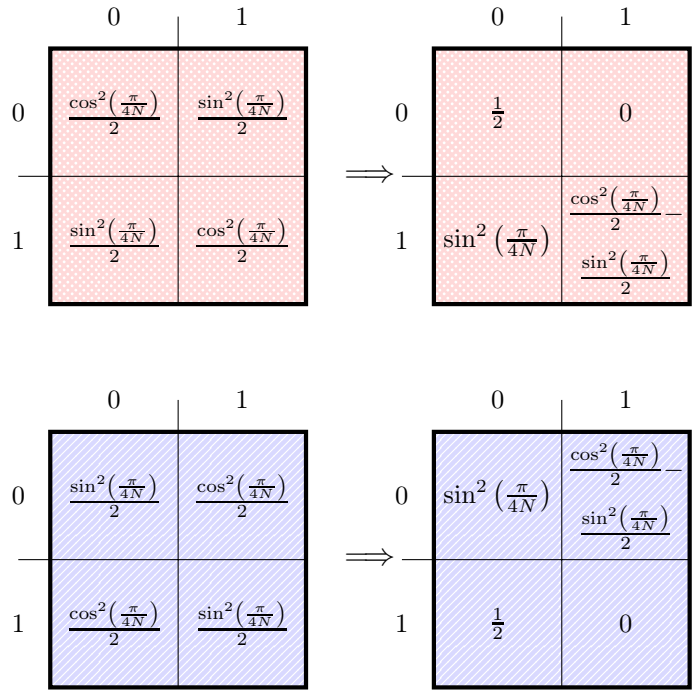

	Note that by shifting probabilities in this way we do not change the correlations of the system, i.e., $I_{N}\left(P_{X_{i}Y_{i}|U_{i}V_{i}}^{z_{i}=0}\right)=I_{N}^{\star}$. 

	The system $P_{X_{i}Y_{i}|U_{i}V_{i}}^{z_{i}=1}$, which is biased towards 1, is symmetric. That is, we shift the same amount of probability but in the opposite direction (from $x_{i}=0$ to $x_{i}=1$). This also implies that $\frac{1}{2}P_{X_{i}Y_{i}|U_{i}V_{i}}^{z_{i}=0}+\frac{1}{2}P_{X_{i}Y_{i}|U_{i}V_{i}}^{z_{i}=1}=P_{X_{i}Y_{i}|U_{i}V_{i}}$.

\end{proof}

\section{\label{sec:Proof-pivotal-index}Proof of Lemma \ref{lem:pivotal-index}}

For convenience we rewrite Lemma \ref{lem:pivotal-index} here again. 
\begin{lem*}
	Let $f:\{0,1\}^{n}\rightarrow\{0,1\}$ be an almost balanced function for which $|\underset{x}{\Pr}\left[f(x)=0\right]-\underset{x}{\Pr}\left[f(x)=1\right]|\leq\frac{1}{3}.$ Then for any $x$ there exists a pivotal index $i(x)$ such that $\varDelta_{i(x)}(x_{1...i-1})\geq\frac{2}{3n}$, where 
	\[
	\varDelta_{i}(x_{1...i-1})\equiv\biggm|\underset{x_{i+1...n}}{\Pr}\left[f(x_{1...i-1}0x_{i+1...n})=0\right]-\underset{x_{i+1...n}}{\Pr}\left[f(x_{1...i-1}1x_{i+1...n})=0\right]\biggm|.
	\]
\end{lem*}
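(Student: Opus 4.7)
The plan is to reduce the existence of a pivotal index to a simple telescoping argument. For any prefix $s \in \{0,1\}^{\le n}$ of length $|s| \le n$, define
\[
g(s) \;\equiv\; \Pr_{Y \sim \{0,1\}^{n-|s|}}\bigl[f(s,Y)=0\bigr],
\]
so that $g(\emptyset) = \Pr_X[f(X)=0]$ and $g(x_{1\ldots n}) = \mathbf{1}[f(x)=0] \in \{0,1\}$. The key observation is that $g$ averages over its last bit: $g(x_{1\ldots i-1}) = \tfrac{1}{2}\bigl(g(x_{1\ldots i-1}0) + g(x_{1\ldots i-1}1)\bigr)$, so regardless of whether $x_i = 0$ or $x_i = 1$,
\[
\bigl|g(x_{1\ldots i}) - g(x_{1\ldots i-1})\bigr| \;=\; \tfrac{1}{2}\,\Delta_i(x_{1\ldots i-1}).
\]

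Now I would fix an arbitrary $x \in \{0,1\}^n$ and argue by contradiction. Suppose no pivotal index exists for $x$, meaning $\Delta_i(x_{1\ldots i-1}) < \tfrac{2}{3n}$ for every $i \in [n]$. Then by the displayed identity and the triangle inequality,
\[
\bigl|g(x_{1\ldots n}) - g(\emptyset)\bigr| \;\le\; \sum_{i=1}^{n} \bigl|g(x_{1\ldots i}) - g(x_{1\ldots i-1})\bigr| \;<\; n \cdot \tfrac{1}{3n} \;=\; \tfrac{1}{3}.
\]

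Finally I would invoke the almost-balanced hypothesis. Since $|\Pr[f(X){=}0] - \Pr[f(X){=}1]| \le \tfrac{1}{3}$ and the two probabilities sum to $1$, we have $g(\emptyset) = \Pr[f(X){=}0] \in [\tfrac{1}{3},\tfrac{2}{3}]$. But $g(x_{1\ldots n}) \in \{0,1\}$, so $|g(x_{1\ldots n}) - g(\emptyset)| \ge \tfrac{1}{3}$, contradicting the previous display. Hence some pivotal index must exist, which is exactly the claim.

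There is no real obstacle here: the only place one has to be careful is the telescoping identity, where the halving comes from the fact that $g$ at a prefix is the uniform average of $g$ at its two one-bit extensions. Once that is written down, the rest is just combining the bound $n \cdot \tfrac{2}{3n} \cdot \tfrac{1}{2} = \tfrac{1}{3}$ with the interval in which the almost-balanced condition places $g(\emptyset)$.
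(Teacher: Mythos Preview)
Your proof is correct and is essentially the same telescoping argument as the paper's: both define the prefix function $g$ (the paper calls it $\pi^0$), use the averaging identity $g(x_{1\ldots i-1})=\tfrac12\bigl(g(x_{1\ldots i-1}0)+g(x_{1\ldots i-1}1)\bigr)$ to relate consecutive differences to $\tfrac12\Delta_i$, and then compare the total drop $|g(x)-g(\emptyset)|\ge \tfrac13$ against the sum of increments. The only cosmetic difference is that the paper argues directly (bounding the maximum increment from below) while you phrase it as a contradiction, and you handle both values of $f(x)$ simultaneously via $g(\emptyset)\in[\tfrac13,\tfrac23]$ rather than by a WLOG assumption.
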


\begin{proof}
	Let $\pi^{0}(x_{1...i-1})=\underset{x_{i...n}}{\Pr}\left[f(x)=0\right]$ where $x=x_{1...i-1}x_{i...n}$ and note the following properties:
	\begin{eqnarray*}
		\pi^{0}(x_{1...i-1}) & = & \frac{1}{2}\left[\pi^{0}(x_{1...i-1}0)+\pi^{0}(x_{1...i-1}1)\right]\\
		\pi^{0}(\phi) & \geq & \frac{1}{3}\\
		\pi^{0}(x_{1...n}) & \in & \left\{ 0,1\right\} .
	\end{eqnarray*}
	Assume w.l.o.g $\pi^{0}(x_{1...n})=0$ (the proof is symmetric for the case $\pi^{0}(x_{1...n})=1$). 
	
	Let $\underset{i\in[n]}{max}|\pi^{0}(x_{1...i})-\pi^{0}(x_{1...i-1})|\leq\zeta$. This implies the following: 
	\[
	\frac{1}{3}\leq\biggm|\pi^{0}(\phi)-\pi^{0}(x_{1...n})\biggm|\leq n\cdot\zeta
	\]
	and therefore $\zeta\geq\frac{1}{3n}$. I.e., there exists $j\in[n]$ such that $|\pi^{0}(x_{1...j})-\pi^{0}(x_{1...j-1})|\geq\frac{1}{3n}$ and since we assumed $\pi^{0}(x_{1...n})=0$ we can farther write $\pi^{0}(x_{1...j-1})\geq\pi^{0}(x_{1...j})+\frac{1}{3n}$. Moreover, since 
	\begin{eqnarray*}
		\pi^{0}(x_{1...j-1}) & = & \frac{1}{2}\left[\pi^{0}(x_{1...j-1}0)+\pi^{0}(x_{1...j-1}1)\right]\\
		 & = & \frac{1}{2}\left[\pi^{0}(x_{1...j-1}x_{j})+\pi^{0}(x_{1...j-1}\overline{x_{j}})\right]
	\end{eqnarray*}
	we get that $\pi^{0}(x_{1...j-1}\overline{x_{j}})\geq\pi^{0}(x_{1...j-1}x_{j})+\frac{2}{3n}$ and therefore for any $x$ there exists an index $i(x)=j$ for which $\varDelta_{i(x)}(x_{1...i-1})\geq\frac{2}{3n}$.
\end{proof}

\section{\label{sec:Formal-strategy}Formal definition of the strategy}

As explained in the main text, Eve's strategy is to use a partition $\left\{ \left(\frac{1}{2},P_{XY|UV}^{z}\right)\right\} _{z\in\{0,1\}}$ for which $P_{XY|UV}=\frac{1}{2}P_{XY|UV}^{0}+\frac{1}{2}P_{XY|UV}^{1}$. The systems $P_{XY|UV}^{0}$ and $P_{XY|UV}^{1}$ are obtained by biasing one individual subsystem $P_{X_{i(x)}Y_{i(x)}|U_{i(x)}V_{i(x)}}$ for each $x$. For any $i\in[n]$ let $P_{X_{i}Y_{i}|U_{i}V_{i}}^{z_{i}=0}$ and $P_{X_{i}Y_{i}|U_{i}V_{i}}^{z_{i}=1}$ be the biased systems as defined in Appendix \ref{sec:Proof-bias-of-single-box}. The system $P_{XY|UV}^{0}$ is then formally defined by 
\begin{multline}
	P_{XY|UV}^{0}(x,y|u,v)=\overset{i(x)-1}{\underset{j=1}{\prod}}P_{X_{j}Y_{j}|U_{j}V_{j}}(x_{j},y_{j}|u_{j},v_{j})\cdot P_{X_{i(x)}Y_{i(x)}|U_{i(x)}V_{i(x)}}^{z_{i}=\sigma}(x_{i(x)},y_{i(x)}|u_{i(x)},v_{i(x)})\cdot\\
	\overset{n}{\underset{j=i(x)+1}{\prod}}P_{X_{j}Y_{j}|U_{j}V_{j}}(x_{j},y_{j}|u_{j},v_{j})\label{eq:formal-definition}
\end{multline}
where $i(x)$ is the pivotal index of $x$ as in Definition \ref{Pivotal-index} and 
\[
\sigma=\begin{cases}
0 & \underset{x_{i+1...n}}{\Pr}\left[f(x_{1...i-1}0x_{i+1...n})=0\right]>\underset{x_{i+1...n}}{\Pr}\left[f(x_{1...i-1}1x_{i+1...n})=0\right]\\
1 & \mbox{otherwise}
\end{cases}.
\]
That is, if $f(x)$ is more likely to result in $f(x)=0$ if $x_{i(x)}=0$ then Eve biases the $i(x)$'th system towards $0$ and if not then towards 1. Note that since Eve manipulates the $i(x)$'th system only if $\varDelta_{i(x)}(x_{1...i-1})\geq\frac{2}{3n}$ $\underset{x_{i+1...n}}{\Pr}\left[f(x_{1...i-1}0x_{i+1...n})=0\right]$ never equals $\underset{x_{i+1...n}}{\Pr}\left[f(x_{1...i-1}1x_{i+1...n})=0\right]$. 

The complementary system $P_{XY|UV}^{1}$ is defined in the exact same way but with $\overline{\sigma}$ instead of $\sigma$.

In order to prove the legality of the strategy we first prove that $P_{XY|UV}^{0}$ is a probability distribution.
\begin{lem*}
	The system $P_{XY|UV}^{0}$ is a probability distribution. That is, 
	\begin{enumerate}
		\item For all $x,y,u,v$ $P_{XY|UV}^{0}(x,y|u,v)\geq0$
		\item The system is normalized. For all $u,v$, $\underset{x,y}{\sum}P_{XY|UV}^{0}(x,y|u,v)=1$
	\end{enumerate}
\end{lem*}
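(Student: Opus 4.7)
The plan is to verify the two required conditions separately, with most of the work going into normalization. For non-negativity, I would argue factor by factor. The unbiased factors $P_{X_{j}Y_{j}|U_{j}V_{j}}$ for $j\neq i(x)$ are probability distributions by construction, so non-negative. For the biased factor $P^{z_{i(x)}=\sigma}_{X_{i(x)}Y_{i(x)}|U_{i(x)}V_{i(x)}}$, I would appeal to the explicit shift in Appendix \ref{sec:Proof-bias-of-single-box}: the only cells whose mass decreases go from $\tfrac{1}{2}\cos^{2}(\pi/4N)$ to $\tfrac{1}{2}\cos^{2}(\pi/4N)-\tfrac{1}{2}\sin^{2}(\pi/4N)=\tfrac{1}{2}\cos(\pi/2N)\geq 0$ for $N\geq 1$, while the cells that gain mass obviously remain non-negative. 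Since a product of non-negative numbers is non-negative, this settles (1).

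For normalization, the key observation I would exploit is that $i(x)$ depends only on the prefix $x_{1\ldots i(x)-1}$, not on the suffix. Let $T=\{x_{1\ldots i(x)-1} : x\in\{0,1\}^{n}\}$ be the set of all pivotal prefixes. Minimality in Definition \ref{Pivotal-index} ensures that no element of $T$ is a proper prefix of another, and Lemma \ref{lem:pivotal-index} guarantees that every $x\in\{0,1\}^{n}$ has some pivotal prefix in $T$. So $T$ is a \emph{complete prefix code} for $\{0,1\}^{n}$.

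I would then split the sum $\sum_{x,y}P^{0}_{XY|UV}(x,y|u,v)$ according to the pivotal prefix $p\in T$. For each fixed $p$ of length $i-1$, the expression in Equation (\ref{eq:formal-definition}) factorizes across coordinates and I can sum independently: over $(x_{i},y_{i})$ the biased subsystem sums to $1$ (a redistribution of mass preserves the total), over $(x_{j},y_{j})$ for $j>i$ each unbiased subsystem sums to $1$, and over $y_{j}$ for $j<i$ each factor reduces to the marginal $P_{X_{j}|U_{j}V_{j}}(p_{j}|u_{j},v_{j})$. What remains is $\prod_{j=1}^{|p|}P_{X_{j}|U_{j}V_{j}}(p_{j}|u_{j},v_{j})$. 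Summing over $p\in T$ and invoking that $T$ is a complete prefix code, this collapses into $\sum_{x\in\{0,1\}^{n}}\prod_{j=1}^{n}P_{X_{j}|U_{j}V_{j}}(x_{j}|u_{j},v_{j})=1$.

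The main obstacle is the combinatorial step, i.e.\ making precise that the pivotal prefixes form a complete prefix code and that partitioning the sum by prefix does not overcount or miss any $x\in\{0,1\}^{n}$. Once that is clean the rest is a routine application of the normalization of the individual biased and unbiased subsystems together with the product structure of $P_{XY|UV}$.
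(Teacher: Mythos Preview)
Your argument is correct, but the paper's proof of normalization is shorter and avoids the prefix-code bookkeeping. Instead of partitioning the sum by pivotal prefix, the paper pairs each $x$ with the string $x^{i(x)}$ obtained by flipping the $i(x)$'th bit. Because the shift defining $P^{z_{i}=\sigma}$ moves mass only along rows (Equation~(\ref{eq:property-1})), one has
\[
P^{0}(x,y|u,v)+P^{0}(x^{i(x)},y|u,v)=P(x,y|u,v)+P(x^{i(x)},y|u,v),
\]
and since $i(x^{i(x)})=i(x)$ and $\sigma$ is the same for both (both depend only on $x_{1\ldots i(x)-1}$), the map $x\mapsto x^{i(x)}$ is an involution, so summing over the pairs immediately gives $\sum_{x,y}P^{0}=\sum_{x,y}P=1$. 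Your approach makes the prefix structure explicit and would generalise to other product measures on Alice's marginals, while the paper's pairing trick exploits the specific row-preserving form of the bias and reduces directly to the already-known normalization of the unbiased system in one line. Both rest on the same underlying fact that $i(x)$ (and $\sigma$) depend only on the prefix $x_{1\ldots i(x)-1}$; the paper uses it implicitly to make the pairing well-defined, whereas you promote it to the main combinatorial object.
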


\begin{proof}
	Each of the multiplicands in Equation (\ref{eq:formal-definition}) is non-negative and therefore for all $x,y,u,v$ it also holds that $P_{XY|UV}^{0}(x,y|u,v)\geq0$. Farther more, since 
	\begin{multline}
	P_{X_{i(x)}Y_{i(x)}|U_{i(x)}V_{i(x)}}^{z_{i}=\sigma}(x_{i(x)},y_{i(x)}|u_{i(x)},v_{i(x)})+P_{X_{i(x)}Y_{i(x)}|U_{i(x)}V_{i(x)}}^{z_{i}=\sigma}(\overline{x_{i(x)}},y_{i(x)}|u_{i(x)},v_{i(x)})=\\
	P_{X_{i(x)}Y_{i(x)}|U_{i(x)}V_{i(x)}}(x_{i(x)},y_{i(x)}|u_{i(x)},v_{i(x)})+P_{X_{i(x)}Y_{i(x)}|U_{i(x)}V_{i(x)}}(\overline{x_{i(x)}},y_{i(x)}|u_{i(x)},v_{i(x)})\label{eq:property-1}
	\end{multline}
	(cf. Figure \ref{fig:biased-box}) we also have 
	\[
	P_{XY|UV}^{0}(x,y|u,v)+P_{XY|UV}^{0}(x^{i(x)},y|u,v)=P_{XY|UV}(x,y|u,v)+P_{XY|UV}(x^{i(x)},y|u,v)
	\]
	where $x^{i(x)}$ is the string $x$ with the $i(x)$'th bit flipped, i.e.,$x^{i(x)}=x_{1}...x_{i(x)-1}\overline{x}_{i(x)}x_{i(x)+1}...x_{n}$. This implies that 
	\[
	\underset{x,y}{\sum}P_{XY|UV}^{0}(x,y|u,v)=\underset{x,y}{\sum}P_{XY|UV}(x,y|u,v)=1\,.
	\]
\end{proof}
The same proof holds for $P_{XY|UV}^{1}$ as well. The fact that $P_{XY|UV}^{0}$ and $P_{XY|UV}^{1}$ are probability distributions is not enough. We also need to prove that they are complementary systems, i.e., $P_{XY|UV}=\frac{1}{2}P_{XY|UV}^{0}+\frac{1}{2}P_{XY|UV}^{1}$.

\begin{lem*}
	$P_{XY|UV}=\frac{1}{2}P_{XY|UV}^{0}+\frac{1}{2}P_{XY|UV}^{1}$.
\end{lem*}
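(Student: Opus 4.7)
The plan is to unfold the formal definition of $P^0_{XY|UV}$ and $P^1_{XY|UV}$ from Equation (\ref{eq:formal-definition}) and exploit the fact that the only difference between the two systems, for each fixed $x$, is the value of $\sigma$ used at the pivotal position $i(x)$. For a given $x$ the pivotal index $i(x)$ depends only on $x$ (and on the fixed hash function $f$), so both systems use exactly the same factorization structure: the unbiased factor $P_{X_j Y_j|U_j V_j}(x_j, y_j|u_j, v_j)$ for every $j\neq i(x)$, together with a single biased subsystem at position $i(x)$ which is biased toward $0$ in one case and toward $1$ in the other.

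First I would compute $\tfrac{1}{2}P^0_{XY|UV}(x,y|u,v)+\tfrac{1}{2}P^1_{XY|UV}(x,y|u,v)$ pointwise. Since the factors for $j\neq i(x)$ are literally identical in the two expressions, they can be pulled out of the half-sum. What remains inside the brackets is precisely
\[
\tfrac{1}{2}P^{z_i=\sigma}_{X_{i(x)}Y_{i(x)}|U_{i(x)}V_{i(x)}}(x_{i(x)},y_{i(x)}|u_{i(x)},v_{i(x)})+\tfrac{1}{2}P^{z_i=\overline{\sigma}}_{X_{i(x)}Y_{i(x)}|U_{i(x)}V_{i(x)}}(x_{i(x)},y_{i(x)}|u_{i(x)},v_{i(x)}).
\]

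Next I would invoke the single-system identity already established at the end of Appendix \ref{sec:Proof-bias-of-single-box}, namely $\tfrac{1}{2}P^{z_i=0}_{X_iY_i|U_iV_i}+\tfrac{1}{2}P^{z_i=1}_{X_iY_i|U_iV_i}=P_{X_iY_i|U_iV_i}$. Because $\{\sigma,\overline{\sigma}\}=\{0,1\}$ irrespective of which value $\sigma$ takes on the prefix $x_{1\ldots i(x)-1}$, this identity immediately collapses the bracket to the unbiased subsystem $P_{X_{i(x)}Y_{i(x)}|U_{i(x)}V_{i(x)}}(x_{i(x)},y_{i(x)}|u_{i(x)},v_{i(x)})$. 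Reinserting this factor back into the product yields $\prod_{j=1}^{n} P_{X_jY_j|U_jV_j}(x_j,y_j|u_j,v_j)$, which by the definition of the unperturbed shared system is exactly $P_{XY|UV}(x,y|u,v)$.

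I do not expect a serious obstacle here: the claim reduces, after isolating the one coordinate where $P^0$ and $P^1$ differ, to the single-box averaging identity visible directly in Figure \ref{fig:biased-box}. The only conceptual point worth stating explicitly in the write-up is that $i(x)$ is a function of $x$ alone, so the two systems really do share a common product structure for each fixed $x$, and therefore pointwise equality of $P_{XY|UV}$ with the mixture is all that needs to be checked.
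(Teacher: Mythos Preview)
Your proposal is correct and follows essentially the same approach as the paper: both arguments factor out the common unbiased subsystems at indices $j\neq i(x)$ and reduce the claim to the single-box identity $\tfrac{1}{2}P^{z_i=0}_{X_iY_i|U_iV_i}+\tfrac{1}{2}P^{z_i=1}_{X_iY_i|U_iV_i}=P_{X_iY_i|U_iV_i}$ from Appendix~\ref{sec:Proof-bias-of-single-box}. The only cosmetic difference is that the paper verifies the algebraically equivalent rearrangement $2P-P^{0}=P^{1}$ instead of computing $\tfrac{1}{2}P^{0}+\tfrac{1}{2}P^{1}$ directly.
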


\begin{proof}
	For simplicity we drop the subscript $XY|UV$ from all the systems. For example $P(x,y|u,v)$ should be understood as $P_{XY|UV}(x,y|u,v)$ while $P^{z_{i}=\sigma}(x_{i(x)},y_{i(x)}|u_{i(x)},v_{i(x)})$ should be understood as $P_{X_{i(x)}Y_{i(x)}|U_{i(x)}V_{i(x)}}^{z_{i}=\sigma}(x_{i(x)},y_{i(x)}|u_{i(x)},v_{i(x)})$.
	\begin{eqnarray*}
	2P(x,y|u,v)-P^{0}(x,y|u,v) & = & 2\overset{n}{\underset{j=1}{\prod}}P(x_{j},y_{j}|u_{j},v_{j})-P^{0}(x,y|u,v)\\
	 & = & \overset{n}{\underset{\begin{array}{c}
	j=1\\
	j\neq i(x)
	\end{array}}{\prod}}P(x_{j},y_{j}|u_{j},v_{j})\cdot\left[2P(x_{i(x)},y_{i(x)}|u_{i(x)},v_{i(x)})-P^{z_{i}=\sigma}(x_{i(x)},y_{i(x)}|u_{i(x)},v_{i(x)})\right]\\
	 & = & \overset{n}{\underset{\begin{array}{c}
	j=1\\
	j\neq i(x)
	\end{array}}{\prod}}P(x_{j},y_{j}|u_{j},v_{j})\cdot P^{z_{i}=\overline{\sigma}}(x_{i(x)},y_{i(x)}|u_{i(x)},v_{i(x)})\\
	 & = & P^{1}(x,y|u,v)\,.\qedhere
	\end{eqnarray*}
\end{proof}

We have only left to show that the system $P_{XY|UV}^{0}$ is a time-ordered non-signalling system.
\begin{lem*}
	The system \textup{$P_{XY|UV}^{0}$} is time-ordered non-signalling as in Definition \ref{time-orderd-non-signalling}.
\end{lem*}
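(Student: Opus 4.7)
The plan is to verify both directions of the time-ordered non-signalling condition in Definition \ref{time-orderd-non-signalling}. The three facts I will rely on are: (i) each factor $P_{X_j Y_j \mid U_j V_j}$ is a non-signalling box; (ii) the biased single box $P^{z_i = \sigma}_{X_j Y_j \mid U_j V_j}$ of Appendix \ref{sec:Proof-bias-of-single-box} is also non-signalling and, as in Equation~(\ref{eq:property-1}), satisfies $\sum_{x_j} P^{z_i=\sigma}_j(x_j, y_j \mid u_j, v_j) = \sum_{x_j} P_j(x_j, y_j \mid u_j, v_j)$; and (iii) the pivotal index $i(x)$ is determined by the bits $x_1, \dots, x_{i(x)-1}$ strictly preceding it.

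For Bob's direction (no signalling from $y_{I_2}$ to $v_{I_2}$), the argument is essentially immediate. Fix any $i\in [n]$ and any $x, y_{I_1}, u, v_{I_1}$. Since $x$ is fixed, both the pivotal index $i(x)$ and the bias direction $\sigma$ are fixed. Summing $y_{I_2}$ through the product in Equation~(\ref{eq:formal-definition}), each factor $\sum_{y_j} P_j(x_j, y_j \mid u_j, v_j)$ (for $j\in I_2$, $j \neq i(x)$) is independent of $v_j$ by the non-signalling of the individual box $P_j$, and the one biased factor $\sum_{y_{i(x)}} P^{z_i=\sigma}_{i(x)}(x_{i(x)}, y_{i(x)} \mid u_{i(x)}, v_{i(x)})$ is independent of $v_{i(x)}$ by the non-signalling of the biased box. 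The two cases $i(x) \in I_1$ and $i(x) \in I_2$ are handled uniformly.

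For Alice's direction I would split on whether the pivotal index is already forced to lie in $I_1$ by the fixed prefix $x_{I_1}$. \textbf{Case A:} some $j < i$ satisfies $\varDelta_j(x_{1\dots j-1}) \geq 2/(3n)$, so $i(x)$ equals a fixed $k<i$ for every suffix $x_{I_2}$. Then the biased box sits inside $I_1$, every factor in $I_2$ is simply $P_j$, and $\sum_{x_j} P_j(x_j, y_j \mid u_j, v_j)$ is independent of $u_j$ for each $j\in I_2$, giving the required independence of $u_{I_2}$. \textbf{Case B:} no such $j$ exists, so $i(x)\geq i$ for every suffix and the pivotal index varies with $x_i, \dots, x_{k-1}$. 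Partition $\sum_{x_{I_2}}$ according to the value $k=i(x)$: for each admissible choice of $(x_i,\dots,x_{k-1})$, use identity~(ii) to replace $\sum_{x_k} P^{z_i=\sigma_k}_k$ with $\sum_{x_k} P_k$, then collapse $\sum_{x_{k+1}},\dots,\sum_{x_n}$ into product marginals. Because the constraint "$i(x)=k$" depends only on the prefix $x_{1\dots k-1}$ (fact~(iii)), the reassembly over all $k\geq i$ is exactly the natural decomposition of $\sum_{x_{I_2}} \prod_{j\geq i} P_j(x_j,y_j\mid u_j,v_j)$ by pivotal value, so the full marginal equals $\sum_{x_{I_2}} P_{XY\mid UV}(x, y\mid u, v)$. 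Since $P_{XY\mid UV}$ is a product of individually non-signalling boxes, this is independent of $u_{I_2}$.

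The only delicate point is Case~B, where the location of the bias depends on variables that are being summed. Fact~(iii) is what makes the argument go through: once the prefix $(x_i,\dots,x_{k-1})$ that determines $i(x)=k$ is fixed, summing the remaining $x_k$ freely exchanges the biased box for the unbiased one via identity~(ii), and the partition over $k\geq i$ telescopes neatly back to the unbiased marginal. This is precisely the causal property that distinguishes time-ordered non-signalling from full non-signalling, and it is the reason the strategy respects the former but would fail under the latter. The same two-case analysis applied to $P^1_{XY|UV}$ (with $\bar\sigma$ in place of $\sigma$) establishes time-ordered non-signalling of the complementary branch as well.
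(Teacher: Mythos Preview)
Your proposal is correct and follows the paper's overall architecture: the same two-case split on Alice's side according to whether the pivotal index lies in $I_1$ or $I_2$, and the same use of the row-sum identity (Equation~(\ref{eq:property-1})) to collapse the biased factor in Case~B. Your Case~B argument is in fact more carefully stated than the paper's, since you make explicit the partition of $\sum_{x_{I_2}}$ by the value $k=i(x)$, whereas the paper writes $\sum_{x_{I_2 \setminus i(x)}}$ with $i(x)$ still varying under the sum.

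Two places where you diverge from the paper are worth noting. On Bob's side, the paper observes that the bias is a within-row shift and then invokes Lemmas~4.4--4.6 of \cite{arnon2012towards} to conclude full non-signalling among Bob's subsystems; you instead argue directly that once $x$ is fixed the product factorises and each single box (biased or not) has $y_j$-marginal independent of $v_j$. Your route is more self-contained, but it tacitly assumes the biased single box is non-signalling from Bob to Alice (i.e.\ $\sum_{y_j}P^{z_i=\sigma}_j$ is independent of $v_j$), which the paper never states explicitly; it holds provided the row shift of Appendix~\ref{sec:Proof-bias-of-single-box} is applied uniformly to all input pairs, not only the ``relevant'' ones pictured. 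In Case~A on Alice's side, the paper builds an explicit bijection $x\mapsto x'$ exploiting symmetries of the maximally-entangled box, whereas you simply note that all $I_2$-factors are the unbiased $P_j$ and that $\sum_{x_j}P_j$ is independent of $u_j$. Your argument is cleaner and does not use any special structure of the quantum box beyond ordinary non-signalling.
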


\begin{proof}
	For the conditions on Bob's side of the system we first note the following. In the system $P_{X_{i(x)}Y_{i(x)}|U_{i(x)}V_{i(x)}}^{z_{i}=\sigma}$ we shift probabilities only within the same row. Moreover, we shift the probability in the exact same way on identical rows (cf. Figure \ref{fig:biased-box}: the first row in the upper boxes is identical to the second row in the lower boxes). It then follows from Lemmas 4.4, 4.5 and 4.6 in \cite{arnon2012towards} that full non-signalling conditions hold for Bob's side (i.e., every subset of his systems cannot signal any other subset of systems). In particular, the time-ordered non-signalling conditions hold. 

	For simplicity we drop the subscript $XY|UV$ from all the systems as in the previous proof. We now want to prove that the conditions on Alice's side hold, i.e., that for any sets $I_{1},I_{2}$ as in Definition \ref{time-orderd-non-signalling} 
	\begin{equation}
		\forall x_{I_{1}},y,u_{I_{1}},u_{I_{2}},u'_{I_{2}},v\quad\underset{x_{I_{2}}}{\sum}P^{0}(x_{I_{1}},x_{I_{2}},y|u_{I_{1}},u_{I_{2}},v)=\underset{x_{I_{2}}}{\sum}P^{0}(x_{I_{1}},x_{I_{2}},y|u_{I_{1}},u'_{I_{2}},v).\label{eq:Alice-side-eq}
	\end{equation}
	For any $x_{I_{1}}$ there are two possible cases, as indicated in Figure \ref{fig:cases}; the pivotal index $i(x)$ is either in $I_{1}$ or in $I_{2}$. We show that on both cases the time-ordered non-signalling conditions on Alice's side hold. 

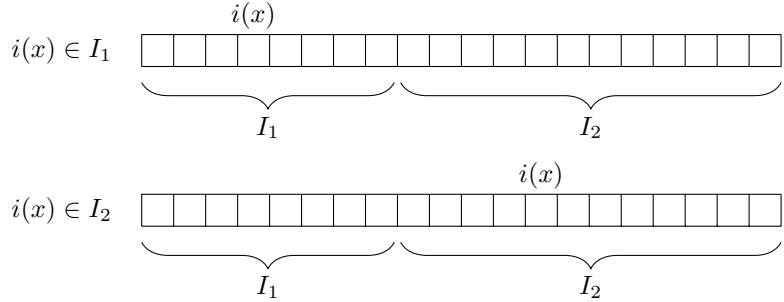
\begin{figure}[b]
\centering{}
\begin{tikzpicture}[scale=0.85]
	\begin{scope}
		\draw (-1.25,0.25) node {$i(x)\in I_{1}$};
		\draw[step=0.5] (0,0) grid (10,0.5);
		\draw[decorate,decoration={brace,mirror,amplitude=10pt}] (0,-0.25) -- (3.95,-0.25) node [midway,yshift=-0.6cm] {$I_{1}$};
		\draw[decorate,decoration={brace,mirror,amplitude=10pt}] (4.05,-0.25) -- (10,-0.25)  node [midway,yshift=-0.6cm] {$I_{2}$};

		\draw (1.75,0.8) node {$i(x)$};
	\end{scope}

	\begin{scope}[shift={(0,-2.5)}]
		\draw (-1.25,0.25) node {$i(x)\in I_{2}$};
		\draw[step=0.5] (0,0) grid (10,0.5);
		\draw[decorate,decoration={brace,mirror,amplitude=10pt}] (0,-0.25) -- (3.95,-0.25) node [midway,yshift=-0.6cm] {$I_{1}$};
		\draw[decorate,decoration={brace,mirror,amplitude=10pt}] (4.05,-0.25) -- (10,-0.25)  node [midway,yshift=-0.6cm] {$I_{2}$};

		\draw (6.25,0.8) node {$i(x)$};
	\end{scope}
\end{tikzpicture}\caption{\label{fig:cases}Two possible cases: $i(x)\in I_{1}$ or $i(x)\in I_{2}$}
\end{figure}

	First assume that for the pivotal index $i(x)\in I_{1}$. For any $u,u'$ and $v$, for any $x$ let 
	\[
	x_{j}'=\begin{cases}
	\overline{x_{j}} & u_{j}\neq u'_{j}\:\wedge v_{j}=2N-1\\
	x_{j} & \mbox{otherwise}
	\end{cases}
	\]
	and $x'=x'_{1}\cdot\cdot\cdot x'_{n}$. Furthermore, note that for the unbiased system $P_{XY|UV}$ we have $P_{XY|UV}(x,y|u',v)=P_{XY|UV}(x',y|u,v)$. Since $i(x)\in I_{1}$ we have 
	\begin{eqnarray*}
	P^{0}(x_{I_{1}},x_{I_{2}},y|u_{I_{1}},u'_{I_{2}},v) & = & \overset{i(x)-1}{\underset{j=1}{\prod}}P(x_{j},y_{j}|u_{j},v_{j})\cdot P^{z_{i}=\sigma}(x_{i(x)},y_{i(x)}|u_{i(x)},v_{i(x)})\cdot\overset{n}{\underset{j=i(x)+1}{\prod}}P(x_{j},y_{j}|u'_{j},v_{j})\\
	 & = & \overset{i(x)-1}{\underset{j=1}{\prod}}P(x_{j},y_{j}|u_{j},v_{j})\cdot P^{z_{i}=\sigma}(x_{i(x)},y_{i(x)}|u_{i(x)},v_{i(x)})\cdot\overset{n}{\underset{j=i(x)+1}{\prod}}P(x'_{j},y_{j}|u{}_{j},v_{j})\\
	 & = & P^{0}(x_{I_{1}},x'_{I_{2}},y|u_{I_{1}},u{}_{I_{2}},v)
	\end{eqnarray*}
	and therefore Equation (\ref{eq:Alice-side-eq}) holds as well. 

	For the second case, assume that $i(x)\notin I_{1}$. $\forall x_{I_{1}},y,u_{I_{1}},u_{I_{2}},u'_{I_{2}},v$, denote by $u'=u_{I_{1}}u'_{I_{2}}$. Then 
	\begin{eqnarray*}
	\underset{x_{I_{2}}}{\sum}P^{0}(x_{I_{1}},x_{I_{2}},y|u_{I_{1}},u'_{I_{2}},v) & = & \underset{x_{I_{2}}}{\sum}\overset{i(x)-1}{\underset{j=1}{\prod}}P(x_{j},y_{j}|u'_{j},v_{j})\cdot P^{z_{i}=\sigma}(x_{i(x)},y_{i(x)}|u'_{i(x)},v_{i(x)})\cdot\overset{n}{\underset{j=i(x)+1}{\prod}}P(x_{j},y_{j}|u'{}_{j},v_{j})\\
	 & = & \underset{x_{I_{2}/i(x)}}{\sum}\overset{i(x)-1}{\underset{j=1}{\prod}}P(x_{j},y_{j}|u'_{j},v_{j})\cdot\\
	 &  & \cdot\left[P^{z_{i}=\sigma}(x_{i(x)},y_{i(x)}|u'_{i(x)},v_{i(x)})+P^{z_{i}=\sigma}(\overline{x_{i(x)}},y_{i(x)}|u'_{i(x)},v_{i(x)})\right]\cdot\\
	 &  & \cdot\overset{n}{\underset{j=i(x)+1}{\prod}}P(x_{j},y_{j}|u'{}_{j},v_{j})\\
	 & = & \underset{x_{I_{2}/i(x)}}{\sum}\overset{i(x)-1}{\underset{j=1}{\prod}}P(x_{j},y_{j}|u'_{j},v_{j})\cdot\\
	 &  & \cdot\left[P(x_{i(x)},y_{i(x)}|u'_{i(x)},v_{i(x)})+P(\overline{x_{i(x)}},y_{i(x)}|u'_{i(x)},v_{i(x)})\right]\cdot\\
	 &  & \cdot\overset{n}{\underset{j=i(x)+1}{\prod}}P(x_{j},y_{j}|u'{}_{j},v_{j})\\
	 & = & \underset{x_{I_{2}}}{\sum}\overset{n}{\underset{j=1}{\prod}}P(x_{j},y_{j}|u'_{j},v_{j})\\
	 & = & \underset{x_{I_{2}}}{\sum}P(x_{I_{1}},x_{I_{2}},y|u_{I_{1}},u'_{I_{2}},v)
	\end{eqnarray*}
	where the third equality is due to Equation (\ref{eq:property-1}). Now since the unbiased system $P$ fulfils all non-signalling conditions, and in particular it is also time-ordered non-signalling, we have $\underset{x_{I_{2}}}{\sum}P(x_{I_{1}},x_{I_{2}},y|u_{I_{1}},u'_{I_{2}},v)=\underset{x_{I_{2}}}{\sum}P(x_{I_{1}},x_{I_{2}},y|u_{I_{1}},u{}_{I_{2}},v)$. Adding everything together we get 
	\begin{eqnarray*}
	\underset{x_{I_{2}}}{\sum}P^{0}(x_{I_{1}},x_{I_{2}},y|u_{I_{1}},u'_{I_{2}},v) & = & \underset{x_{I_{2}}}{\sum}P(x_{I_{1}},x_{I_{2}},y|u_{I_{1}},u'_{I_{2}},v)\\
	 & = & \underset{x_{I_{2}}}{\sum}P(x_{I_{1}},x_{I_{2}},y|u_{I_{1}},u{}_{I_{2}},v)\\
	 & = & \underset{x_{I_{2}}}{\sum}P^{0}(x_{I_{1}},x_{I_{2}},y|u_{I_{1}},u{}_{I_{2}},v)\,.
	\end{eqnarray*}
	Therefor for both cases Equation (\ref{eq:Alice-side-eq}) holds and the system $P_{XY|UV}^{0}$ is time-ordered non-signalling. 
\end{proof}
The same proof holds for $P_{XY|UV}^{1}$ as well. 
\end{appendices}
\end{document}